\newtheorem{thrm}{Theorem}
\newcommand{\cstar}{c_\star}
\newcommand{\Nin}[1]{N^{_-}_{#1}}
\newcommand{\Nout}[1]{N^{_+}_{#1}}
\newcommand{\PLTR}{\ensuremath{\text{PLTR}}\xspace}
\newcommand{\PLTRR}{\ensuremath{\text{R-PLTR}}\xspace}
\newcommand{\gplus}{g^{\text{\tiny{+}}}}
\newcommand{\gminus}{g^{\text{\small{-}}}}
\newcommand{\MOV}{\ensuremath{\text{MoV}}\xspace}
\newcommand{\MOVD}{\ensuremath{\text{MoV}_D}\xspace}
\newcommand{\emptyliveedge}{G'_\emptyset}
\newcommand{\SOLk}[1]{\text{SOL}_{DkS}(#1)}
\newcommand{\OPT}{\text{OPT}}
\newcommand{\OPTk}{\text{OPT}_{DkS}}
\DeclareMathOperator*{\argmax}{arg\,max}
\newcommand{\card}[1]{\vert #1 \vert}
\NewDocumentCommand\Ex{gg}{%
	\ensuremath{%
		\mathbf{E} \IfNoValueTF{#1}{}{\left[ #1 \IfNoValueTF{#2}{}{\cond #2} \right]}
	}
}
\RenewDocumentCommand\Pr{gg}{%
	\ensuremath{%
		\mathbf{P} \IfNoValueTF{#1}{}{\left( #1 \IfNoValueTF{#2}{}{\cond #2} \right)}
	}
}
\title{\textbf{Election Control through Social Influence\\with Unknown Preferences}} 
\author{Mohammad {Abouei Mehrizi} \\
		{\small{}Gran Sasso Science Institute}\\
		{\small{}L'Aquila, Italy}\\
		{\small{}\texttt{mohammad.aboueimehrizi@gssi.it}}\\
    \and Federico Cor\`o\\ 
        {\small{}Sapienza University of Rome}\\
		{\small{}Rome, Italy}\\
		{\small{}\texttt{federico.coro@uniroma1.it}}\\
	\and Emilio Cruciani \\
		{\small{}Inria, I3S Lab, UCA, CNRS}\\
		{\small{}Sophia Antipolis, France}\\
		{\small{}\texttt{emilio.cruciani@inria.fr}}\\
    \and Gianlorenzo {D'Angelo}\\
		{\small{}Gran Sasso Science Institute}\\
		{\small{}L'Aquila, Italy}\\
		{\small{}\texttt{gianlorenzo.dangelo@gssi.it}}\\
}
\date{}
\begin{document}

\maketitle
\begin{abstract}
The election control problem through social influence asks to find a set of nodes in a social network of voters to be the starters of a political campaign aiming at supporting a given target candidate. 
Voters reached by the campaign change their opinions on the candidates. 
The goal is to shape the diffusion of the campaign in such a way that the chances of victory of the target candidate are maximized.
Previous work shows that the problem can be approximated within a constant factor in several models of information diffusion and voting systems, assuming that the controller, i.e., the external agent that starts the campaign, has full knowledge of the preferences of voters. 
However this information is not always available since some voters might not reveal it.
Herein we relax this assumption by considering that each voter is associated with a probability distribution over the candidates. 
We propose two models in which, 
when an electoral campaign reaches a voter, this latter modifies its probability distribution according to the amount of influence it received from its neighbors in the network.
We then study the election control problem through social influence on the new models:
In the first model, under the Gap-ETH, election control cannot be approximated within a factor better than $1/n^{o(1)}$, where $n$ is the number of voters; in the second model, which is a slight relaxation of the first one, the problem admits a constant factor approximation algorithm.
\end{abstract}

\cleardoublepage
\section{Introduction}
Social media play a fundamental role in everyone's life providing information, entertainment, and learning.
Many social media users prefer to access social network platforms such as Facebook or Twitter before news websites as they provide faster means for information diffusion~\cite{Matsa2018}.
As a consequence, online social networks are also exploited as a tool to alter users' opinions. The extent to which the opinions of an individual are conditioned by social interactions is called social influence. It has been observed that social influence starting from a small set of individuals may generate a cascade effect that allows to reach a large part of the network.
Recently, this capability has been used to affect the outcome of political elections.
There exists evidence of political intervention which shows the effect of social media manipulation on the elections outcome, e.g., by spreading fake news~\cite{pennycook2018prior}. 
A real-life example is in the 2016 US election where a study 
showed that on average 92\% of people remembered pro-Trump fake news and 23\% of them remembered pro-Clinton fake news~\cite{allcott2017social}.
Several other cases have been studied~\cite{bond201261,ferrara2017disinformation,kreiss2016seizing,ElectionCampaigningSebastianStier2018}.

There exists a wide literature about manipulation of voting systems; we point the reader to a recent survey~\cite{faliszewski2016control}.
Despite that, only few studies focus on the problem of controlling the outcome of political elections through the spread of information in social networks.
The \emph{election control} problem~\cite{wilder2018controlling} consists in selecting a set of nodes of a network to be the starters of a diffusion with the aim of maximizing the chances for a target candidate to win an election. 
In particular, in the \emph{constructive} election control problem, the goal is to maximize the \emph{Margin of Victory} (\MOV) of the target candidate on its most critical opponent, i.e., the difference of votes (or score, depending on the voting system) between the two candidates after the effect of social influence.
A variation of the problem, known as \emph{destructive} election control, aims at making a target candidate lose.
Both problems have been originally analyzed under the Independent Cascade Model (ICM)~\cite{kempe2015maximizing}, and considering \emph{plurality voting}; approximation and hardness-of-approximation results are provided~\cite{wilder2018controlling}.
Cor\`o et al.~\cite{IJCAI-19,aamas19} analyzed the problem in arbitrary scoring rules voting systems under the Linear Threshold Model (LTM)~\cite{kempe2015maximizing}, providing constant factor approximation algorithms.
It has been later shown that it is \mbox{$\mathit{NP}$-hard} to find any constant factor approximation in the multi-winner scenario~\cite{sirocco2020}.

Faliszewski et al.~\cite{faliszewski2018opinion} examine bribery in an opinion diffusion process with voter clusters: each node is a cluster of voters, represented as a weight, with a specific list of candidates; there is an edge between two nodes if they differ by the ordering of a single pair of adjacent candidates. The authors show that making a specific candidate win in their model is \mbox{$\mathit{NP}$-hard} and fixed-parameter
tractable with respect to the number of candidates.
Bredereck et al.~\cite{bredereck2017manipulating} studied the problem of manipulating diffusion on social networks, though not specifically in the context of elections. They show that identifying successful manipulation via bribing, adding/deleting edges, or controlling the order of asynchronous updates are all computationally hard problems.
A similar approach is taken by Apt et al.~\cite{AptM14}, where the authors introduce a threshold model for social networks in order to characterize the role of social influence in the global adoption of a commercial product.

\paragraph{Contribution.}
In all previous works it is assumed that the controller knows the preference list of each voter. However, this assumption is not always satisfied in realistic scenarios as voters may not reveal their preferences to the controller.
Herein, in Section~\ref{ssec:problem}, we introduce two new models, \emph{Probabilistic Linear Threshold Ranking} (\PLTR) and \emph{Relaxed-\PLTR} (\PLTRR), that encompass scenarios where the preference lists of the voters are not fully revealed.
Specifically, we use an uncertain model in which the controller only knows, for each voter, a probability distribution over the candidates. In fact, in applied scenarios, the probability distribution could be inferred by analyzing previous social activity of the voters, e.g., re-tweets or likes of politically oriented posts.
We envision that some given focused news about a target candidate spread through the network as a message.
We model such a diffusion via the LTM~\cite{kempe2015maximizing}.
The message will have an impact on the opinions of voters who received it from their neighbors, leading to a potential change of their vote if the neighbors exercise a strong influence on them. 
With this intuition in mind, in our models, the probability distribution of the voters reached by the message is updated as a function of the degree of influence that the senders of the message have on them.
The rationale is that the controller, without knowing the exact preference list which is kept hidden, can just update its estimation on it by considering the mutual degree of influence among voters.
We acknowledge that our models do not cover all scenarios that can arise in election control, e.g., messages about multiple candidates. However they represent a first step towards modeling uncertainty.

We study on our models both the constructive and destructive election control problems.
We show in Section~\ref{sec:hardness} that the election control problem in $\PLTR$ is at least as hard to approximate as the Densest-$k$-Subgraph problem~\cite{manurangsi2017densestk}.
This result implies several conditional hardness of approximation bounds for our problem, for example it cannot be approximated within any constant factor, unless the Unique Game Conjecture holds and it cannot be approximated to within any polynomial factor if the Exponential Time Hypothesis holds.
However, these hardness of approximation bounds do not hold for the election control problem in $\PLTRR$, for which we can show that the problem remains $\mathit{NP}$-hard.

In Section~\ref{sec:approximation} we provide an algorithm that guarantees a constant factor approximation to the constructive and destructive election control problems in $\PLTRR$.
In the relaxed model, $\PLTRR$, also ``partially-influenced'' nodes change their probability distribution.
Although this simple modification is enough to make the problem substantially easier, preliminary experimental results
show that the hardness of approximation for $\PLTR$ is purely theoretical and is due to hard instances in the reduction.

In Section~\ref{sec:simulation} we present the simulation of our models and algorithm on two real-world datasets.

\section{Influence Models and Problem Statement}
\label{sec:bg}
\paragraph{Background.}
\label{ssec:influence}
Influence Maximization is the problem of finding a subset of the most influential users in a social network with the aim of maximizing the spread of information given a particular diffusion model.
In this work, we focus on the diffusion model known as \emph{Linear Threshold Model} (LTM)~\cite{kempe2015maximizing}.
Given a graph $G=(V, E)$, each edge $(u,v) \in E$ has a weight $b_{uv} \in [0,1]$, each node $v \in V$ has a threshold $t_v \in [0,1]$ sampled uniformly at random and independently from the others, and the sum of the weights of the incoming edges of $v$ is $\sum_{(u,v) \in E} b_{uv} \leq 1$. 
Each node can be either \emph{active} or \emph{inactive}. 
Let $A_0$ be a set of initially active nodes and $A_t$ be the set of nodes active at time $t$. 
A node $v$ becomes active if the sum of the incoming active weights at time $t-1$ is greater than or equal to its threshold $t_v$, i.e., $v \in A_t$ if and only if $v \in A_{t-1}$ or $\sum_{u \in A_{t-1} : (u,v) \in E} b_{uv} \geq t_v$.

The process terminates at the first time $\tilde{t}$ in which the set of active nodes would not change in the next round, i.e., $A_{\tilde{t}} = A_{\tilde{t}+1}$.
We define the eventual set of active nodes as $A := A_{\tilde{t}}$ and the expected size of $A$ as $\sigma(A_0)$.
Given a budget $B$, the influence maximization problem consists in finding a set of nodes $A_0$ of size $B$, called \emph{seeds}, in such a way that $\sigma(A_0)$ is maximum. 

Kempe et al.~\cite{kempe2015maximizing} showed that the distribution of active nodes $A$, for any set $A_0$, is equal to the distribution of the sets of nodes that are \emph{reachable} from $A_0$ in the set of random graphs called \emph{live-edge graphs}. A live-edge graph is a subgraph in which each node has at most one incoming edge.
Even if the number of live-edge graphs is exponential, by using standard Chernoff-Hoeffding bounds, it is possible to compute a $(1\pm\epsilon)$-approximation of $\sigma(A_0)$, for a given $A_0$, with high probability by sampling a polynomial number of live-edge graphs.
Moreover, $\sigma(A_0)$ is monotone and submodular w.r.t.\ to the initial set $A_0$; hence, an optimal solution can be approximated to a factor of $1-1/e$ using a simple greedy algorithm~\cite{NWF78}.
There has been intensive research on the problem in the last decade. We point the reader to a recent survey on the topic~\cite{li2018influence}.

\paragraph{Notation.}
\label{ssec:problem}

Let $G=(V, E)$ be a directed graph representing a social network of voters and their interactions.
We denote the set of $m$ candidates running for the election as $C=\{c_1, c_2, \ldots, c_m\}$ and the target candidate as $\cstar \in C$.
Each node $v \in V$ has a probability distribution over the candidates $\pi_v$, where $\pi_v(c_i)$ is the probability that $v$ votes for candidate $c_i$;
then for each $v \in V$ we have that $\pi_v(c_i) \geq 0$ for each candidate $c_i$ and $\sum_{i=1}^{m} \pi_v(c_i) = 1$.
Moreover, we denote by $\Nin{v}$ and $\Nout{v}$, respectively, the sets of incoming and outgoing neighbors for each node $v \in V$.
For each candidate $c_i$, we assume that $\pi_v(c_i)$ is at least a polynomial fraction of the number of voters, i.e., $\pi_v(c_i) = \Omega(1/|V|^\gamma)$ for some constant $\gamma > 0$.%
\footnote{\label{footnote_weights}The assumption is used in the approximation results, since Influence Maximization problem with exponential (or exponentially small) weights on nodes is an open problem. However, the assumption is realistic: Current techniques to estimate such parameters generate values linear in the number of messages shared by a node.}
Let $X_v(c_i)$ be an indicator random variable, where $X_v(c_i) = 1$ if $v$ votes for $c_i$, with probability $\pi_v(c_i)$, and $X_v(c_i) = 0$ otherwise.
We define the expected \emph{score} of a candidate $c_i$ as the expected number of votes that $c_i$ obtains from the voters
\(
F(c_i, \emptyset) := \Ex{\sum_{v \in V} X_v(c_i)} = \sum_{v \in V} \pi_v(c_i).
\)

\paragraph{PLTR Model.}
As in LTM, each node $v$ has a threshold $t_v \in [0, 1]$; each edge $(u, v) \in E$ has a weight $b_{uv}$, that models the influence of node $u$ on $v$, with the constraint that, for each node $v$, $\sum_{u : (u,v) \in E} b_{uv} \leq 1$. 
We assume the weight of each existing edge $(u,v)$ not to be too small, i.e., $b_{uv} = \Omega(1/|V|^\gamma)$ for some constant $\gamma > 0$.\footnotemark[4]%

Given an initial set of seed nodes $S$, the diffusion process proceeds as in LTM: Inactive nodes become active if the sum of the weights of incoming edges from active neighbors is greater than or equal to their threshold. 
Mainly, we are modeling the spread of some ads/news about the target candidate: Active nodes receive the message and spread it to their neighbors.
Moreover, in \PLTR, active nodes are influenced by the message, increasing their probability of voting for the target candidate. In particular, an active node $v$ increases the probability of voting for $\cstar$ by an amount equal to the sum of the weights of its edges incoming from other active nodes, i.e., it adds $\sum_{u\in A \cap \Nin{v}} b_{uv}$ to the initial probability $\pi_v(\cstar)$. Then it normalizes to maintain $\pi_v$ as a probability distribution.
Formally, for each node $v \in A$, where $A$ is the set of active nodes at the end of LTM, the preference list of $v$ is denoted as $\tilde{\pi}_v$ and it is equal to:
\begin{equation}
\label{eq:updatecstar-updateci}
\tilde{\pi}_v(\cstar) = 
\frac{\pi_v(\cstar) + \sum_{u\in A \cap \Nin{v}} b_{uv}}{1 + \sum_{u\in A \cap \Nin{v}} b_{uv}}\quad\text{and}\quad
\tilde{\pi}_v(c_i) = \frac{\pi_v(c_i)}{1 + \sum_{u\in A \cap \Nin{v}} b_{uv}},
\end{equation}
for each \(c_i \neq \cstar\).
All inactive nodes $v \in V \setminus A$ will have $\tilde{\pi}_v(c_i) = \pi_v(c_i)$ for all candidates, including $\cstar$.
As for the expected score before the process, we can compute the expected final score of a candidate $c_i$ as
\[
 F(c_i, S) := \Ex{ \sum_{v \in V} X_v(c_i, S) }
 = \sum_{v \in V} \tilde{\pi}_v(c_i),
\]%
where $X_v(c_i, S)$ is the indicator random variable after the process, i.e., $X_v(c_i, S) = 1$ if $v$ votes for $c_i$, with probability $\tilde{\pi}_v(c_i)$, and $X_v(c_i, S) = 0$ otherwise. 

Let us denote by $\mathcal{G}$ the set of all possible live-edge graphs sampled from $G$. 
We can also compute $F(c_i, S)$ by means of live-edge graphs used in the LTM model as
\begin{equation}\label{eq:scoreliveedge}
F(c_i, S) = \sum_{G' \in \mathcal{G}} F_{G'}(c_i, S) \cdot \Pr(G'),
\end{equation}
where $F_{G'}(c_i, S)$ is the score of $c_i$ in $G'\in \mathcal{G}$ and $\Pr(G')$ is the probability of sampling live-edge $G'$.
More precisely, for the target candidate we have
\begin{equation*}
 F_{G'}(\cstar, S)\! =\!\!\!\!\!\sum_{v \in R_{G'}\!(S)}\!\!\!\!\!\!\!  \frac{\pi_v(\cstar)\! +\! \sum_{u\in R_{G'}(S) \cap \Nin{v}} b_{uv}}{1 + \sum_{u\in R_{G'}(S) \cap \Nin{v}} b_{uv}} +\!\!\!\! \sum_{v \in V\setminus R_{G'}\!(S)}\!\!\!\!\!\!\!\!\!\pi_v(\cstar),
\end{equation*}
where $R_{G'}(S)$ is the set of nodes reachable from $S$ in $G'$.
A similar formulation can be derived for $c_i \neq \cstar$.

\paragraph{\PLTRR Model.}
In the next section we prove that the election control problem in \PLTR is hard to approximate to within a polynomial fraction of the optimum (Theorem~\ref{theorem:apx-hard}). 
However, we show that a small relaxation of the model allows us to approximate it to within a constant factor.
In the relaxed model, that we call \emph{Relaxed Probabilistic Linear Threshold Ranking} (\PLTRR), the probability distribution of a node is updated if it has at least an active incoming neighbor (also if the node is not active itself).
More formally, every node $v\in V$ (and not just every node $v\in A$ as in \PLTR) changes its preference by updating its probability distribution via Eq.~\eqref{eq:updatecstar-updateci}; thus also nodes that have at least an active incoming neighbor can change.
The rationale is that a voter might slightly change its opinion about the target candidate if it receives some influence from its active incoming neighbors even if the received influence is not enough to activate it (thus making it propagate the information to its outgoing neighbors). 
Therefore, we include this small amount of influence in the objective function.
In the next section, we show that election control in \PLTRR is still \mbox{$\mathit{NP}$-hard}, and then we give an algorithm that guarantees a constant approximation ratio in this setting.

\paragraph{Problem Statement.}
In the \emph{constructive election control} problem we maximize the expected Margin of Victory (\MOV) of the target candidate w.r.t.\ its most voted opponent, akin to~\cite{IJCAI-19,wilder2018controlling}.
We define the $\MOV(S)$ obtained starting from $S$ as the expected increase, w.r.t.\ the value before the process, of the difference between the score of $\cstar$ and that of the most voted opponent.%
\footnote{\label{footnote_mov}The increment in margin of victory, instead of just the margin, cannot be negative and gives well defined approximation ratios.}
Formally, if $c$ and $\hat{c}$ are respectively the candidates different from $\cstar$ with the highest score before and after the diffusion process
{\begin{equation}\label{def:mov}
    \MOV(S) := F(c, \emptyset) - F(\cstar, \emptyset) - \left( F(\hat{c}, S) - F(\cstar, S) \right).
\end{equation}}%

Given a budget $B$, the \emph{constructive election control problem} asks to find a set of seed nodes $S$, of size at most $B$, that maximizes $\MOV(S)$.
It is worth noting that \MOV can also be expressed as a function of the score gained by candidate $\cstar$ and the score lost by its most voted opponent $\hat{c}$ at the end of the process.
We define the score gained and lost by a candidate $c_i$ as 
{\begin{gather*}
\gplus(c_i, S) := F(c_i, S) - F(c_i, \emptyset), 
\gminus(c_i, S) := F(c_i, \emptyset) - F(c_i, S).
\end{gather*}}%
Therefore, we can rewrite $\MOV(S)$ as
{\begin{equation}
    \MOV(S) = \gplus(\cstar, S) + \gminus(\hat{c}, S) - F(\hat{c}, \emptyset) + F(c, \emptyset).
\label{eq:mov}
\end{equation}}%

The \emph{destructive election control problem}, instead, aims at making the target candidate lose by minimizing its \MOV.
In this dual scenario, the probability distributions of the voters are updated slightly differently in our models, i.e., influenced voters have a lower probability of voting for the target candidate $\cstar$ mimicking the spread of ``negative'' news about $\cstar$.

\paragraph{Influencing Voters About Other Candidates.}

In our model the controller can send to the seed nodes a message in support of only one single candidate, e.g., latest news about the candidate.
We prove that the best strategy is that of sending messages in support of the target candidate $\cstar$, i.e., if the controller wants $\cstar$ to win, then, according to our models, the direct strategy of targeting voters with news about $\cstar$ is more effective than the alternative strategy of distracting the same voters with news about other candidates.

Indeed, it is not always sufficient to maximize the score of the target candidate to ensure his victory or to maximize the margin of victory, and it is  easy to find counter-examples of this strategy. Moreover, in the models of Wilder et al.~\cite{wilder2018controlling} and Cor\`o et al.~\cite{IJCAI-19} it could be convenient to increase the score of a third candidate in order to make the most voted opponent w.r.t.\ $\cstar$ lose score and favor $\cstar$.

However, as previously claimed, in our models this does not hold. In fact, we can distinguish between the three possible strategies:
\begin{itemize}
    \item $\MOV_1$: Influencing voters about $\cstar$.
    \item $\MOV_2$: Influencing voters about $\hat{c}$, i.e., the most voted opponent w.r.t.\ $\cstar$ at the end of the process.
    \item $\MOV_3$: Influencing voters about any other candidate $c$.
\end{itemize}
Let us now analyze the \MOV of $\cstar$ in these three different cases.
As described in Equation~\eqref{eq:mov}, a general formulation for \MOV is the following
{\begin{align*}
\MOV(S) &:= \gplus(\cstar, S) + \gminus(\hat{c}, S) + \Delta\\
 & = F(\cstar, S) - F(\cstar, \emptyset) + F(\hat c, \emptyset) - F(\hat c, S) + \Delta,
\end{align*}}%
where $S$ is the initial set of seed nodes and $\Delta$ is the sum of constant terms that are not modified by the process.
With some algebra, it is possible to compute the \MOV of $\cstar$ in such scenarios, getting the following formulations:
{\begin{align*}
\bullet\, \MOV_{1}(S) 
&=\sum_{v \in A} \frac{\left(1 + \pi_v(\hat{c}) - \pi_v(\cstar) \right) \sum_{u \in A \cap \Nin{v}} b_{uv}}{1 + \sum_{u \in A \cap \Nin{v}} b_{uv}} + \Delta;
\\
\bullet\, \MOV_{2}(S) 
&=\sum_{v \in A} \frac{\left(\pi_v(\hat{c}) - \pi_v(\cstar) -1 \right) \sum_{u \in A \cap \Nin{v}} b_{uv}}{1 + \sum_{u \in A \cap \Nin{v}} b_{uv}} + \Delta;
\\
\bullet\, \MOV_{3}(S) 
&=\sum_{v \in A} \frac{\left(\pi_v(\hat{c}) - \pi_v(\cstar)\right) \sum_{u \in A \cap \Nin{v}} b_{uv}}{1 + \sum_{u \in A \cap \Nin{v}} b_{uv}} + \Delta.
\end{align*}}%
We just need to observe that $\MOV_{1}(S) \geq \MOV_{2}(S)$ and that $\MOV_{1}(S) \geq \MOV_{3}(S)$ to conclude that it is always convenient to influence the voters about the target candidate whenever you want to maximize the \MOV of $\cstar$.
Therefore, in the remainder of the paper, we only focus on changing the score of the target candidate $\cstar$.
Note that the observations above hold both for $\PLTR$ and $\PLTRR$.

\section{Hardness Results}
\label{sec:hardness}

In this section we provide two hardness results related to election control in $\PLTR$ and $\PLTRR$.
In Theorem~\ref{theorem:apx-hard} we show that maximizing the \MOV in $\PLTR$ is at least as hard to approximate as the Densest-$k$-subgraph problem. This implies several conditional hardness of approximation bounds for the election control problem. Indeed, it has been shown that the Densest-$k$-subgraph problem is hard to approximate: to within any constant bound under the Unique Games with Small Set Expansion conjecture~\cite{raghavendra2010graph}; to within $n^{-1/(\log \log n)^c}$, for some constant $c$, under the exponential time hypothesis (ETH)~\cite{manurangsi2017densestk}; to $n^{-f(n)}$ for any function $f\in o(1)$, under the Gap-ETH assumption~\cite{manurangsi2017densestk}.
Then, in Theorem~\ref{theorem:np-hardness}, we show that maximizing the \MOV in $\PLTRR$ is still $\mathit{NP}$-hard.

\begin{thrm}
\label{theorem:apx-hard}
An $\alpha$-approximation algorithm to the election control problem in $\PLTR$ gives an $\alpha\beta$-approximation to the Densest $k$-Subgraph problem, for a positive constant $\beta<1$.
\end{thrm}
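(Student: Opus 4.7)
To prove Theorem~\ref{theorem:apx-hard} my plan is to exhibit a polynomial-time, approximation-preserving reduction from the Densest-$k$-Subgraph (DkS) problem to election control in \PLTR. Given a DkS instance $(H=(V_H,E_H),k)$ with $n=|V_H|$, I would build the \PLTR\ instance by taking $V:=V_H$, adding for every $\{u,v\}\in E_H$ both directed arcs $(u,v)$ and $(v,u)$ with a common weight $w=1/n^c$ for a sufficiently large constant $c\geq 4$ (so $w=\Omega(1/n^\gamma)$ remains admissible), setting $B:=k$, and choosing the initial distributions $\pi$ uniformly so that $1+\pi_v(\hat c)-\pi_v(\cstar)$ is a fixed positive constant $1+\delta$ for every voter; I would also give one designated non-target candidate a large initial margin, so that this candidate is $\hat c$ both before and after the process, and the constant term appearing in the expression of $\MOV_1$ is pinned down.

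The main computation is to evaluate $\MOV(S)$ through the closed form
\[
\MOV_1(S)\;=\;\sum_{v\in A}\frac{(1+\pi_v(\hat c)-\pi_v(\cstar))\,\sigma_v}{1+\sigma_v},\qquad \sigma_v\;=\;\sum_{u\in A\cap\Nin{v}} b_{uv},
\]
splitting $A$ into seeds $S$ and ``cascaded'' non-seeds $A\setminus S$. For seeds, if $A=S$ then $\sigma_v=w\cdot d_{H[S]}(v)=O(1/n^{c-1})$, hence $\sigma_v/(1+\sigma_v)=\sigma_v(1-o(1))$; summing over $v\in S$ together with $\sum_{v\in S}d_{H[S]}(v)=2e_S$ gives a ``direct'' contribution of $2(1+\delta)\,w\,e_S\,(1-o(1))$, where $e_S:=|E(H[S])|$.

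To bound the cascade contribution, I would use a standard live-edge-graph path-counting argument: the probability that any non-seed $v$ is reachable from $S$ in the live-edge graph is at most $\sum_{\ell\ge 1}k\,\Delta^{\ell-1}w^\ell=O(kw)$ whenever $w\Delta<1$, where $\Delta\le n-1$ is the maximum in-degree of $G$. Hence the expected cascade size is at most $O(k/n^{c-1})$ and each cascaded node contributes at most $\sigma_v\le w\Delta=O(1/n^{c-1})$ to the MOV, so the whole cascade adds at most $O(k/n^{2c-2})$, which for $c\ge 4$ and $e_S\ge 1$ (i.e.\ on any non-trivial DkS instance) is $o(w\,e_S)$. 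Combining the two bounds, $\MOV(S)=2(1+\delta)\,w\,e_S\,(1\pm o(1))$, so an $\alpha$-approximation to MOV immediately yields $e_S\geq\alpha(1-o(1))\,e_{S^\star}$, i.e.\ the desired $\alpha\beta$-approximation for DkS; picking $c$ large enough in advance makes $\beta$ any prescribed constant in $(0,1)$.

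The main obstacle is the cascade analysis: the LTM dynamics can randomly activate non-seed nodes through multi-hop paths, and one must show that this does not distort the near-linear relationship between $\MOV(S)$ and $e_S$. Handling this requires a careful balance in the choice of the weight $w$ -- small enough that the cascaded contribution is strictly dominated by the direct seed-to-seed contribution, yet still $\Omega(1/n^\gamma)$ as required by Footnote~\ref{footnote_weights} -- together with a union bound over directed paths in the live-edge graph. A secondary technical point is to ensure that the identity of the most-voted opponent $\hat c$ is not changed by the diffusion, which is taken care of by the large initial gap engineered in~$\pi$.
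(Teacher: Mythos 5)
Your reduction is essentially the paper's: the same bidirected copy of the DkS instance, the same uniform edge weight $1/n^{c}$ with $c\geq 4$, and budget $k$; the paper simply fixes $\pi_v(\hat c)=1$ and $\pi_v(\cstar)=0$ for all $v$ (so $1+\pi_v(\hat c)-\pi_v(\cstar)=2$ and $\MOV(S)=2F(\cstar,S)$ falls out immediately), rather than engineering a separate dominant opponent. Where you genuinely diverge is in how the diffusion is tamed. The paper conditions on the live-edge graph realization: it shows the empty live-edge graph occurs with probability at least $1-1/n^{c-2}$, in which case the active set is exactly $S$ and the score is $\Theta(\SOLk{S}/n^{c})$, while all other live-edge graphs together contribute $\mathcal{O}(1/n^{2(c-2)})=\mathcal{O}(\SOLk{S}/n^{c})$ because any single live-edge graph yields score $\mathcal{O}(1/n^{c-2})$. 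You instead decompose the $\MOV$ sum over active nodes into a seed part and a cascade part and kill the cascade by a union bound over live directed paths. Both routes are valid and deliver the same conclusion; the paper's conditioning argument is arguably cleaner (no path counting, and the "everything other than the empty graph is negligible" step is a one-line probability bound), while yours gives a sharper multiplicative $(1\pm o(1))$ estimate with matching constants, since you exploit that every $\sigma_v$ is $o(1)$ where the paper loosely bounds the denominators between $1$ and $2$.

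Two small points to patch. First, your decomposition is not exhaustive as written: when $A\neq S$, the activated non-seeds also inflate $\sigma_v$ for seed nodes $v$, so the seed contribution can exceed $2(1+\delta)w e_S$; this extra term is bounded by $(1+\delta)\,w\,n\,\mathbf{E}\!\left[|A\setminus S|\right]=\mathcal{O}(k/n^{2c-2})$, i.e.\ the same order as your cascade bound, so it is absorbed, but it must be accounted for. Second, you write the error as $o(w\,e_S)$ assuming $e_S\geq 1$, but $e_S$ is the edge count of the set \emph{returned by the algorithm}, which you cannot assume is nonzero a priori; the clean fix is to state the cascade error additively as $\mathcal{O}(k/n^{2c-2})$ and compare it to $w\cdot\OPTk\geq w$ (on nontrivial instances $\OPTk\geq 1$), which is exactly how the paper's chain $\frac{\alpha}{2}\beta_1\frac{\OPTk}{n^{\gamma}}\leq F(\cstar,S)\leq\beta_2\frac{\SOLk{S}}{n^{\gamma}}$ avoids the issue.
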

\begin{proof}
Given an undirected graph $G=(V,E)$ and an integer $k$, Densest $k$-Subgraph (DkS) is the problem of finding the subgraph induced by a subset of $V$ of size $k$ with the highest number of edges given that $k$ is fixed.

The reduction works as follows:
Consider the $\PLTR$ problem on $G$, where each undirected edge $\{u,v\}$ is replaced with two directed edges $(u,v)$ and $(v,u)$. Let us consider $m$ candidates and assume that all nodes initially have null probability of voting for all the candidates but one, different from $\cstar$, that we denote as $\hat{c}$. Formally we have that, $\pi_v(\hat{c}) = 1$ and $\pi_v(c_i) = \pi_v(\cstar) = 0$ for each $c_i \neq \hat{c}$ and for each $v \in V$. 
Assign to each edge $(u,v)\in E$ a weight $b_{uv} = \frac{1}{n^{\gamma}}$, for any fixed constant $\gamma\geq 4$ and $n = |V|$.

We show the reduction considering the problem of maximizing the score, because in the instance considered in the reduction the \MOV is exactly equal to twice the score.
In fact, the score of $\hat{c}$ after $\PLTR$ starting from any initial set $S$ is
\begin{align*}
&F(\hat{c}, S) 
= \sum_{v \in V} \tilde{\pi}_v(\hat{c})
= \sum_{v \in V \setminus A} \pi_v(\hat{c}) + \sum_{v \in A} \tilde{\pi}_v(\hat{c})
\\
&= |V| - |A| + \sum_{v \in A} \frac{1}{1+\sum_{u \in A \cap \Nin{v}} \frac{1}{n^{\gamma}}}
\\
&= |V| - \sum_{v \in A} \left( 1 - \frac{1}{1+\sum_{u \in A \cap \Nin{v}} \frac{1}{n^{\gamma}}} \right)
\\
&= |V| - \sum_{v \in A} \left( \frac{\sum_{u \in A \cap \Nin{v}} \frac{1}{n^{\gamma}}}{1+\sum_{u \in A \cap \Nin{v}} \frac{1}{n^{\gamma}}} \right)
= |V| - F(\cstar, S),
\end{align*}
because $( {\sum_{u \in A \cap \Nin{v}} \frac{1}{n^{\gamma}}} ) / ( {1+\sum_{u \in A \cap \Nin{v}} \frac{1}{n^{\gamma}}} ) = \tilde{\pi}_v(\cstar)$ and $\pi_v(\cstar) = 0$ for each $v \in V$.
Thus, according to the definition of \MOV in Equation~(6), we have that 
\[
\MOV(S) = |V| - (|V| - F(\cstar,  S) - F(\cstar,  S)) = 2F(\cstar, S).
\]

To compute the expected final score of the target candidate we average its score in all live live-edge graph in $\mathcal{G}$, according to Formula~(3).
In our reduction, the empty live-edge graph $\emptyliveedge = (V, \emptyset)$ is sampled \emph{with high probability}, i.e., with probability at least $1- \frac{1}{n^{\gamma -2}}$:
\begin{align*}
\Pr{\emptyliveedge}
&= \prod_{v \in V} \left(1 - \sum_{u \in \Nin{v}}b_{uv}\right)
= \prod_{v \in V} \left(1 - \frac{\card{\Nin{v}}}{n^{\gamma}}\right) \\
&\geq \prod_{v \in V} \left(1 - \frac{1}{n^{\gamma-1}} \right)
= \left( 1-\frac{1}{n^{\gamma-1}} \right)^n
\\ 
&\stackrel{(a)}{=} \sum_{i=0}^n \binom{n}{i} (1)^{n-i}\left(\frac{-1}{n^{\gamma-1}}\right)^i = 
\sum_{i=0}^n \binom{n}{i} \frac{(-1)^i}{n^{i(\gamma-1)}}
\\
&\stackrel{(b)}{\geq}
\binom{n}{0} - \binom{n}{1}\frac{1}{n^{\gamma -1 }} + \sum_{i=2}^{\lfloor n/2\rfloor} \left(\binom{n}{i} \frac{1}{n^{2i(\gamma-1)}} - \binom{n}{i+1} \frac{1}{n^{(2i+1)(\gamma-1)}}\right)
\\
&\stackrel{(c)}{\geq} 1- \frac{1}{n^{\gamma-2}}
\end{align*}
where in $(a)$ we used the binomial expansion, $(b)$ is due to last negative term in the lhs that does not appear in the rhs when $n$ is even, and $(c)$ is due to 
\[ 
 \binom{n}{i} \frac{1}{n^{2i(\gamma-1)}} \geq \binom{n}{i+1} \frac{1}{n^{(2i+1)(\gamma-1)}},
\]
for any $\gamma \geq 2$.
Since $\Pr{\emptyliveedge} \leq 1$, then $\Pr{\emptyliveedge} = \Theta(1)$.  Moreover, $\sum_{G' \neq \emptyliveedge} \Pr{G'} = \mathcal{O}\left( \frac{1}{n^{\gamma-2}} \right)$

The score obtained by $\cstar$ in a live-edge graph $G'$ starting from any initial set of seed nodes $S$ is

\begin{align*}
F_{G'}(\cstar, S)
&= \sum_{v \in R_{G'}(S)} \frac{\pi_v(\cstar) + \sum_{u \in R_{G'}(S) \cap \Nin{v}} \frac{1}{n^{\gamma}} }{ 1+\sum_{u \in R_{G'}(S) \cap \Nin{v}} \frac{1}{n^{\gamma}} } \\
&= \Theta \left(
\frac{1}{n^{\gamma}} \sum_{v \in R_{G'}(S)}\card{R_{G'}(S) \cap \Nin{v}}
\right),
\end{align*}

since $1 \leq 1+ \sum_{u \in R_{G'}(S) \cap \Nin{v}} \frac{1}{n^{\gamma}} \leq 2$ for each $v \in R_{G'}(S)$.
Also, note that $\sum_{v \in R_{G'}(S)}\card{R_{G'}(S) \cap \Nin{v}}$ is equal to the number of edges of the subgraph induced by the set $R_{G'}(S)$ of nodes reachable from $S$ in $G'$, which is not greater than $n^2$, and thus
\(
F_{G'}(\cstar, S) = \mathcal{O} \left( \frac{1}{n^{\gamma-2}} \right).
\)

Note that in the empty live edge graph $\emptyliveedge$ the set  $R_{\emptyliveedge}(S)$ at the end of LTM is equal to $S$, since the graph has no edges. 
Thus
\[ 
F_{\emptyliveedge}(\cstar, S)
= \frac{1}{n^{\gamma}} \cdot
\sum_{v \in S} \frac{ \card{S \cap \Nin{v}} }{ 1+\sum_{u \in S \cap \Nin{v}} \frac{1}{n^{\gamma}} }
\]
and since the denominator is, again, bounded by two constants we have that
\[ 
F_{\emptyliveedge}(\cstar, S)
= \Theta\left( \frac{ \sum_{v \in S}\card{S \cap \Nin{v}} }{n^{\gamma}} \right)
= \Theta\left( \frac{\SOLk{S}}{n^{\gamma}} \right),
\]
where $\SOLk{S} := \sum_{v \in S} \card{S \cap \Nin{v}}$ is the number of edges of the subgraph induced by $S$, i.e., the value of the objective function of DkS for solution $S$.

Thus, the expected final score of the target candidate is
\[
F(\cstar,S)
= \sum_{G' \in \mathcal{G}} F_{G'}(\cstar, S) \cdot \Pr(G') 
= F_{\emptyliveedge}(\cstar, S)\cdot \Pr(\emptyliveedge)
+ \sum_{G' \neq \emptyliveedge} F_{G'}(\cstar, S)\cdot \Pr(G').
\]
Since $F_{G'}(\cstar, S)$ and $\sum_{G' \neq \emptyliveedge} \Pr{G'}$ are in $\mathcal{O}\left( \frac{1}{n^{\gamma-2}} \right)$, then

\begin{align*}
\sum_{G' \neq \emptyliveedge} F_{G'}(\cstar, S) \cdot \Pr(G')
&= \mathcal{O}\left(\frac{1}{n^{\gamma-2}}\right) \sum_{G' \neq \emptyliveedge} \Pr(G') \\
&= \mathcal{O}\left( \frac{1}{n^{2(\gamma-2)}}\right)
= \mathcal{O}\left(\frac{\SOLk{S}}{n^{\gamma}}\right),
\end{align*}

for any $\gamma\geq4$.
Thus
\[ 
F(\cstar,S) = \Theta\left(
\frac{\SOLk{S}}{n^{\gamma}}
\right) \cdot \Theta(1)
+ \mathcal{O}\left(\frac{\SOLk{S}}{n^{\gamma}}\right)
\]
which means that
\(
F(\cstar,S) = \Theta\left(\frac{\SOLk{S}}{n^{\gamma}}\right).
\)
We apply the Bachmann-Landau definition of $\Theta$ notation: There exist three positive constants $n_0, \beta_1,$ and  $\beta_2$ such that, for all $n> n_0$,
\[ 
\beta_1 \frac{\SOLk{S}}{n^{\gamma}}
\leq
F(\cstar,S)
\leq
\beta_2 \frac{\SOLk{S}}{n^{\gamma}}.
\]
Note that, in this case, the constants $n_0$, $\beta_1$, and $\beta_2$ do not depend on the specific instance.

Since the previous bounds hold for any set $S$ we also have that $\beta_1 \frac{\OPTk}{n^{\gamma}} \leq \OPT \leq \beta_2 \frac{\OPTk}{n^{\gamma}}$, where $\OPT$ is the value of an optimal solution for $\PLTR$ and $\OPTk$ is the value of an optimal solution for DkS.

Suppose there exists an $\alpha$-approximation algorithm for $\PLTR$, i.e., an algorithm that finds a set $S$ s.t. the value of its solution is $\MOV(S) = 2F(\cstar,S) \geq \alpha \cdot \OPT$.
Then,
\[ 
\frac{\alpha}{2} \cdot \beta_1 \frac{\OPTk}{n^{\gamma}}
\leq
\frac{\alpha}{2} \cdot \OPT
\leq
F(\cstar,S)
\leq
\beta_2 \frac{\SOLk{S}}{n^{\gamma}}.
\]
Thus
\( \SOLk{S} \geq \frac{\alpha}{2}\frac{\beta_1}{\beta_2}\OPTk \),
i.e., it is an $\frac{\alpha\beta_1}{2\beta_2}$-approximation to DkS.
\end{proof}
As a corollary of Theorem~\ref{theorem:apx-hard} we get the conditional hardness of approximation bounds stated at the beginning of this section.

\begin{thrm}
Election control in $\PLTRR$ is $\mathit{NP}$-hard.
\label{theorem:np-hardness}
\end{thrm}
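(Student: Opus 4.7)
The plan is to reduce from \emph{Exact Cover by $3$-Sets} (X3C): given a ground set $U$ with $|U|=3q$ and a collection $\mathcal{S}$ of $3$-element subsets of $U$, decide whether some $q$ sets in $\mathcal{S}$ partition $U$; X3C is classically NP-complete. I construct a bipartite $\PLTRR$ instance with one top-layer node $x_i$ per $S_i\in\mathcal{S}$, one bottom-layer node $y_j$ per $u_j\in U$, and a directed edge $(x_i,y_j)$ of weight $b:=1/d$, where $d:=\max_j|\{i:u_j\in S_i\}|$, for every incidence $u_j\in S_i$. I take two candidates $\cstar,\hat{c}$ with $\pi_v(\cstar)=\epsilon$ and $\pi_v(\hat{c})=1-\epsilon$ at every node (for any constant $\epsilon\in(0,1/2)$), and set the budget to $q$. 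The choice $b=1/d$ respects the constraint $\sum_i b_{x_i y_j}\leq 1$ at every $y_j$.

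Because all edges run top-to-bottom, top-layer seeds are deterministically active and never cascade backwards, while bottom-layer seeds influence no one; so one may restrict to $S\subseteq\{x_1,\dots,x_m\}$. For such $S$, a node $y_j$ has at least one active in-neighbor iff $u_j$ is covered by the corresponding sub-collection, and no LTM randomness ever matters (in $\PLTRR$ the update of $y_j$ depends only on whether $y_j$ has an active in-neighbor, not on $y_j$'s own activation). Applying the $\PLTRR$ update rule of Eq.~\eqref{eq:updatecstar-updateci} then yields
\[
\MOV(S) \;=\; \sum_{j=1}^{|U|} g\!\left(c_j(S)\right),\qquad g(c) \;:=\; \frac{2(1-\epsilon)\,c}{d+c},
\]
where $c_j(S):=|\{x_i\in S:u_j\in S_i\}|$ and, since every $S_i$ has size $3$, $\sum_j c_j(S)=3|S|$.

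The crux is concavity: $g$ is strictly concave on $[0,\infty)$ with $g(0)=0$, so $g(c)/c$ is strictly decreasing and therefore $g(c)\leq c\cdot g(1)$ for every integer $c\geq 1$, with equality iff $c\in\{0,1\}$. For $|S|=q$ this gives
\[
\MOV(S) \;\leq\; g(1)\sum_j c_j(S) \;=\; 3q\,g(1) \;=\; \frac{6q(1-\epsilon)}{d+1},
\]
with equality iff every $c_j(S)\in\{0,1\}$; combined with $\sum_j c_j(S)=3q=|U|$ this forces $c_j(S)=1$ for all $j$, i.e.\ $S$ encodes an exact cover of $U$. Since $g$ is strictly increasing, shrinking $|S|$ below $q$ only decreases the objective, so restricting to $|S|\leq q$ is harmless. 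Hence deciding whether the maximum $\MOV$ attains the threshold $6q(1-\epsilon)/(d+1)$ is polynomially equivalent to X3C, giving NP-hardness of election control in $\PLTRR$.

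The main obstacle I anticipate is purely bookkeeping: verifying that no LTM randomness survives on this layered graph, that the weight normalization $\sum_i b_{x_i y_j}\leq 1$ holds (which is guaranteed by the choice $b=1/d$), and that fixing $\epsilon$ as a polynomially small positive constant is consistent with the standing assumption $\pi_v(c)=\Omega(1/|V|^\gamma)$. None of these requires new ideas, but each must be stated carefully.
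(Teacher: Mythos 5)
Your proof is correct, but it takes a genuinely different route from the paper's. The paper reduces from Influence Maximization under LTM: every vertex of the LTM instance is duplicated, each original is linked to its copy by an edge of weight $1$, originals get $\pi_v(\cstar)=1$ and copies get $\pi_v(\cstar)=0$, so that $\MOV(S)$ collapses (up to constants) to the expected number of active original nodes and hardness is inherited wholesale from LTM influence maximization. Your reduction from X3C instead builds a two-layer DAG in which the diffusion is trivial --- seeds never cascade, and since the update of $y_j$ in $\PLTRR$ depends only on $A\cap\Nin{y_j}=S\cap\Nin{y_j}$, no threshold randomness survives --- so the entire difficulty is pushed into the normalization of Eq.~\eqref{eq:updatecstar-updateci}: the per-element gain $g(c)=2(1-\epsilon)c/(d+c)$ is strictly concave with $g(0)=0$, overlapping coverage is strictly wasteful, and the threshold $3q\,g(1)$ is attained exactly by exact covers. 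This isolates a different source of hardness (the diminishing-returns structure of the update rule rather than the cascade), is self-contained rather than resting on the known hardness of influence maximization, and already applies to depth-one instances with two candidates; the paper's reduction is shorter and additionally shows that maximizing the score of $\cstar$ alone is hard. Two points to state explicitly when writing yours up, neither affecting correctness: the inequality $g(c)\le c\,g(1)$ should be asserted for all integers $c\ge 0$ (equality at $c=0$ as well as $c=1$, which is what your equality analysis actually uses), and the degenerate case $|\mathcal{S}|<q$ should be dispatched separately (a trivial no-instance, and the threshold is unreachable since then $\MOV(S)\le 3|\mathcal{S}|\,g(1)<3q\,g(1)$).
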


\begin{proof}
We prove the hardness by reduction from Influence Maximization under LTM, which is known to be $\mathit{NP}$-hard~\cite{kempe2015maximizing}.

Consider an instance $\mathcal{I}_{\text{LTM}} = (G, B)$ of Influence Maximization under LTM. 
$\mathcal{I}_{\text{LTM}}$ is defined by a weighted graph $G = (V, E, w)$ with weight function $w : E \rightarrow [0, 1]$ and by a budget $B$.
Let $\mathcal{I}_{\PLTRR} := (G', B)$ be the instance that corresponds to $\mathcal{I}_{\text{LTM}}$ on $\PLTRR$, defined by the same budget $B$ and by a graph $G' = (V', E', w')$ that can be built as follows:
\begin{enumerate}
\item Duplicate each vertex in the graph, i.e., we define the new set of nodes as $V' := V \cup \{v_{|V| + 1}, \ldots, v_{2|V|} \}$.
\item Add an edge between each vertex $v \in V$ to its copy in $V'$, i.e., we define the new set of edges as $E' := E \cup \{ (v_1, v_{|V| + 1}), \ldots, (v_{|V|}, v_{2|V|}) \}$.
\item Keep the same weight for each edge in $E$ and we set the weights of all new edges to $1$, i.e., $w'(e) = w(e)$ for each $e \in E$ and $w'(e) = 1$ for each $e \in E' \setminus E$. 
    Note that the constraint on incoming weights required by LTM is not violated by $w'$.
\item Consider $m$ candidates $\cstar, c_1, \ldots, c_{m-1}$. For each $v \in V$ we set $\pi_v(\cstar) = 1$ and $\pi_v(c_i) = 0$ for any other candidate $i \in \{1,\ldots,m-1\}$.
For each $v \in V' \setminus V$ we set $\pi_v(\cstar) = 0$, $\pi_v(c_1) = 1$ and $\pi_v(c_i) = 0$ for any other candidate $i \in \{2,\ldots,m-1\}$.
\end{enumerate}

Let $S$ be the initial set of seed nodes of size $B$ that maximizes $\mathcal{I}_{\text{LTM}}$ and let $A$ be the set of active nodes at the end of the process.
The value of the \MOV obtained by $S$ in $\mathcal{I}_{\PLTRR}$ is
$\MOV(S) = |V| - |V \setminus A|$.
Indeed, each node $v \in V$ in $G'$ has $\tilde{\pi}_v(\cstar) = \pi_v(\cstar) = 1$, because the probability of voting for the target candidate remains the same after the normalization.
Moreover, each node $v_i \in V\cap A$ influences its duplicate $v_{|V| + i}$ with probability 1 and therefore $\tilde{\pi}_{v_{|V| + i}}(\cstar) = (\pi_{v_{|V| + i}}(\cstar) + 1)/2 = \frac{1}{2}$.
Therefore, $F(\cstar, \emptyset) = F(c_1, \emptyset) = |V|$, 
$F(\cstar, S) = |V| + \frac{1}{2}|A|$, and $F(c_1, S) = |V\setminus A| + \frac{1}{2}|A|$.

Let $S$ be the initial set of seed nodes of size $B$ that achieves the maximum in $\mathcal{I}_{\PLTRR}$. Without loss of generality, we can assume that $S\subseteq V$, since we can replace any seed node $v_{|V|+i}$ in $V'\setminus V$ with its corresponding node $v_i$ in $V$ without decreasing the objective function. If $A$ is the set of active nodes at the end of the process, then by using similar arguments as before, we can prove that $\MOV(S) = |V| - |V \setminus A|$. 
Let us assume that $S$ does not maximize $\mathcal{I}_{\text{LTM}}$, then, $S$ would also not maximize $\mathcal{I}_{\PLTRR}$, which is a contradiction since $S$ is an optimal solution for $\mathcal{I}_{\PLTRR}$.

We can prove the $\mathit{NP}$-hardness for the case of maximizing the score by using the same arguments. In fact, notice that maximizing the score of $\cstar$, i.e., $F(\cstar, S) = |V| + \frac{1}{2}|A|$, is exactly equivalent to maximize the cardinality of the active nodes in LTM.
\end{proof}

\section{Approximation Results}
\label{sec:approximation}

In this section, we first show that we can approximate the optimal $\MOV$ to within a constant factor by optimizing the increment in the score of $\cstar$.
In detail we show that, given two solutions $S^*$ and $S^{**}$ such that $\gplus(\cstar,S^*)$ and $\MOV(S^{**})$ are maximum, then $\MOV(S^{*})\geq \frac{1}{3} \MOV(S^{**})$. 
Indeed, we show a more general statement that is: If a solution $S$ approximates $\gplus(\cstar,S^{*})$ within a factor $\alpha$, then $\MOV(S)\geq \frac{\alpha}{3}\MOV(S^{**})$.

Then we show that a simple greedy hill-climbing approach (Algorithm~\ref{alg:greedy}) gives a constant factor approximation to the problem of maximizing $\gplus(\cstar,S)$, where the constant is $\frac{1}{2}(1-\frac{1}{e})$. By combining the two results, we get a $\frac{1}{6}(1-\frac{1}{e})$-approximation algorithm for the election control problem in $\PLTRR$.

The next theorem generalizes~\cite[Theorem 5.2]{wilder2018controlling} as it holds for any scoring rule and for any model in which we have the ability to change only the position of $\cstar$ in the lists of a subset of voters and the increment in score of $\cstar$ is at least equal to the decrement in scoring of the other candidates.

\begin{thrm}
An $\alpha$-approximation algorithm for maximizing the increment in score of a target candidate gives an $\frac{\alpha}{3}$-approximation to the election control problem.
\label{theorem:MOVAppFactor}
\end{thrm}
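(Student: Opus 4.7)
The plan is to prove the slightly stronger statement announced before the theorem: if $S$ satisfies $\gplus(\cstar,S) \geq \alpha\cdot\gplus(\cstar,S^{*})$ for an optimal $S^{*}$ of $\gplus(\cstar,\cdot)$, then $\MOV(S) \geq \frac{\alpha}{3}\MOV(S^{**})$, where $S^{**}$ is optimal for $\MOV$. Starting from the decomposition in Equation~\eqref{eq:mov}, I would write
\[
\MOV(S^{**}) = \gplus(\cstar,S^{**}) + \gminus(\hat c^{**}, S^{**}) + \Delta^{**},
\]
where $\hat c^{**}$ is the strongest opponent after the diffusion from $S^{**}$ and $\Delta^{**} := F(c,\emptyset) - F(\hat c^{**},\emptyset) \geq 0$ (since $c$ is the pre-diffusion leader among the candidates different from $\cstar$). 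The strategy is to upper-bound each of these three non-negative summands by $\gplus(\cstar,S^{**})$, giving $\MOV(S^{**}) \leq 3\gplus(\cstar,S^{**})$.

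The first summand is already of the desired form. For the second I would use a conservation-of-mass observation: because the updated distribution $\tilde\pi_v$ in Eq.~\eqref{eq:updatecstar-updateci} still sums to $1$ at every voter, the probability mass gained by $\cstar$ is exactly the total mass lost across the other candidates, so $\gplus(\cstar,S) = \sum_{c_i \neq \cstar}\gminus(c_i,S)$ for every $S$; in particular $\gminus(\hat c^{**},S^{**}) \leq \gplus(\cstar,S^{**})$.

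The main obstacle is bounding $\Delta^{**}$, which captures the shift of the strongest opponent from $c$ to $\hat c^{**}$ due to the diffusion; this has to be controlled even though the two candidates may be different. The plan is to chain two simple facts: (i) the normalization in Eq.~\eqref{eq:updatecstar-updateci} can only shrink $\pi_v(c_i)$ for $c_i \neq \cstar$, so $F(\hat c^{**},S^{**}) \leq F(\hat c^{**},\emptyset)$; and (ii) by the definition of $\hat c^{**}$, $F(c,S^{**}) \leq F(\hat c^{**},S^{**})$. Composing them yields $F(c,S^{**}) \leq F(\hat c^{**},\emptyset)$, hence
\[
\Delta^{**} \leq F(c,\emptyset) - F(c,S^{**}) = \gminus(c,S^{**}) \leq \gplus(\cstar,S^{**}),
\]
where the last inequality is one more application of the conservation-of-mass identity.

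Once $\MOV(S^{**}) \leq 3\gplus(\cstar,S^{**})$ is in hand, the conclusion follows by stringing together four inequalities: non-negativity of the three summands in Equation~\eqref{eq:mov} gives $\MOV(S) \geq \gplus(\cstar,S)$; the hypothesis on $S$ gives $\gplus(\cstar,S) \geq \alpha\,\gplus(\cstar,S^{*})$; optimality of $S^{*}$ for $\gplus(\cstar,\cdot)$ gives $\gplus(\cstar,S^{*}) \geq \gplus(\cstar,S^{**})$; and the previous bound yields $\gplus(\cstar,S^{**}) \geq \frac{1}{3}\MOV(S^{**})$. Altogether, $\MOV(S) \geq \frac{\alpha}{3}\MOV(S^{**})$, proving the theorem.
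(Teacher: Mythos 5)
Your proof is correct, and it is built from the same three ingredients as the paper's: the decomposition of Equation~\eqref{eq:mov}, the conservation identity $\gplus(\cstar,S)=\sum_{c_i\neq\cstar}\gminus(c_i,S)$ together with $\gminus(c_i,S)\geq 0$ for every non-target candidate (so each individual loss is at most $\gplus(\cstar,S)$), and the orderings of pre- and post-diffusion leaders. The difference is in the packaging. You isolate the standalone inequality $\MOV(T)\leq 3\,\gplus(\cstar,T)$ (your argument in fact proves it for every seed set $T$, not only $T=S^{**}$) and then conclude by the clean chain $\MOV(S)\geq\gplus(\cstar,S)\geq\alpha\,\gplus(\cstar,S^{*})\geq\alpha\,\gplus(\cstar,S^{**})\geq\frac{\alpha}{3}\MOV(S^{**})$. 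The paper instead runs a single entangled chain of lower bounds starting from $\MOV(S)$, which forces it to carry the post-diffusion leaders of both $S$ and of the $\MOV$-optimal solution simultaneously and to close with the auxiliary claim $\gminus(\hat c,S^{*})-F(\hat c,\emptyset)+F(\bar c,\emptyset)\geq 0$. Your handling of the leader-switch term $\Delta^{**}$ via the chain $F(c,S^{**})\leq F(\hat c^{**},S^{**})\leq F(\hat c^{**},\emptyset)$, giving $\Delta^{**}\leq\gminus(c,S^{**})\leq\gplus(\cstar,S^{**})$, is an equivalent but more transparent substitute for that auxiliary claim. The two routes buy the same constant; yours is more modular and makes explicit exactly where each of the three copies of $\gplus(\cstar,S^{**})$ comes from.
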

\begin{proof}
Let us consider two solutions $S$ and $S^*$ for the problem of maximizing the \MOV for  candidate $\cstar$, with $S^*$ as the optimal solution to this problem.
These solutions arbitrarily select a subset of voters and modify their preference list changing the score of $\cstar$.
Let us fix $c$ and $\hat{c}$, respectively, as the candidates different from $\cstar$ with the highest score before and after the solution $S$ is applied.
Assume there exists an $\alpha$-approximation to the problem of maximizing the increment in score of the target candidate;
if we do not consider the gain given by the score lost by the most voted opponent, we have that 
\begin{align*}
    \MOV(S) &= \gplus(\cstar, S) + \gminus(\hat{c}, S) - F(\hat{c}) + F(c)    \geq \alpha \gplus(\cstar, S^*) - F(\hat{c}) + F(c)
    \\
    &\quad\geq \frac{\alpha}{3} [\gplus(\cstar, S^*) + \gminus(\bar{c}, S^*) + \gminus(\hat{c}, S^*)] - F(\hat{c}) + F(c),
\end{align*}
where the last inequality holds because $\gplus(\cstar, S) \geq \gminus(c_i, S)$ for any solution $S$ and candidate $c_i$ since $S$ modifies only the score of $\cstar$, increasing it, while the score of all the other candidates is decreased, and the increment in score to $c^*$ is equal to the sum of the decrement in score of all the other candidates.
Since $F(\hat{c})\leq F(c)$, we have that
\begin{align*}
    \MOV(S) & \geq \frac{\alpha}{3} [\gplus(\cstar, S^*) + \gminus(\bar{c}, S^*) + F(c)  + \gminus(\hat{c}, S^*) - F(\hat{c}) + F(\bar{c}) - F(\bar{c})]
    \\
    &= \frac{\alpha}{3} [\MOV(S^*) + \gminus(\hat{c}, S^*) - F(\hat{c}) + F(\bar{c})],
\end{align*}
where $\bar{c}$ is the candidate with the highest score after the solution $S^*$ is applied.
By definition of $\bar{c}$ we have that $F(\bar{c},S^*) \geq F(\hat{c},S^*)$, which implies that 
\begin{align*}
\gminus(\bar{c}, S^*) - \gminus(\hat{c}, S^*) &=
F(\bar{c})  - F(\bar{c},S^*) - F(\hat{c}) + F(\hat{c},S^*)\leq F(\bar{c}) - F(\hat{c}).
\end{align*}
Thus, $\gminus(\hat{c}, S^*) - F(\hat{c}) + F(\bar{c})\geq 0$ and we conclude that $\MOV(S) \geq \frac{\alpha}{3} \MOV(S^*)$.
\end{proof}

\paragraph{Constructive Election Control in \PLTRR.}
Next theorem shows how to get a constant factor approximation to the problem of maximizing the \MOV in $\PLTRR$ by reducing the problem to an instance of the weighted version of the influence maximization problem with LTM~\cite{kempe2015maximizing}.

This extension of the LTM, associates to each node a non-negative weight ($w: V \rightarrow \mathbb{R}^+$) 
that captures the importance of activating that node.
The goal is to find the initial seed set in order to maximize the sum of the weights of the active nodes at the end of the process, i.e., finding
\(
\argmax_{S} \sigma_w(S) = \Ex{\sum_{v \in A} w(v)},
\) where $w$ is a weight function over the node set.
\begin{algorithm}[ht]
\caption{\textsc{GreedyScore}}\label{alg:greedy}
\begin{algorithmic}[1]
\Require{Social graph $G=(V,E)$; Budget $B$}
\State $S = \emptyset$; $\hat{G} = (G, w)$  \Comment{Weighted graph $\hat{G}$}
\While{$|S| \leq B$}
\State $v = \argmax_{u \in V \setminus S} \sigma_w(S \cup \{u\}) - \sigma_w(S)$
\State $S = S \cup \{ v \}$
\EndWhile
\State \Return{$S$}
\end{algorithmic}
\end{algorithm}

A simple hill-climbing greedy algorithm achieves a $(1-1/e)$-approximation if the weights are polynomial (or polynomially small) in the number of nodes of the graph and the number of live-edge graph samples is polynomially large in the weights~\cite{kempe2015maximizing}.%
\footnote{It is still an open question how well the value of $\sigma_w(S)$ can be approximated for an influence model with arbitrary node weights.}
Intuitively, if a node has an exponentially small probability of being sampled in the live-edge graph associated with a high weight, then a polynomial number of samples would not be enough to consider it in the solution with non-negligible probability.
We exploit this result to approximate the \MOV via Algorithm~\ref{alg:greedy}, reducing the problem of maximizing the score to that of maximizing $\sigma_w(S)$ in the weighted LTM.
We define a new graph $\hat{G}$ with the same sets of nodes and edges of $G$. 
Then, we assign a weight to each node $v \in V$ equal to $w(v) := \sum_{u \in \Nout{v}}b_{vu}(1-\pi_u(\cstar))$.
Note that we are able to correctly approximate the value of $\sigma_w(S)$ using such weights since by hypothesis on the model $b_{uv} \geq \frac{1}{|V|^{\gamma_1}}$, for each $(u,v) \in E$ and for some constant $\gamma_1 > 0$, and since $\pi_v(c_i) \geq \frac{1}{|V|^{\gamma_2}}$, for each $v \in V$ for some constant $\gamma_2 > 0$.
By applying a multiplicative form of the Chernoff bound we can get a $1 \pm \epsilon$ approximation of $\sigma_w(S)$, with high probability~\cite[Proposition~4.1]{kempe2015maximizing}.

Thus, we can use Algorithm~\ref{alg:greedy} to maximize the influence on $\hat G$. 
The algorithm starts with an empty set $S$ and adds to it, in each of $B$ rounds, the node $v$ with maximal marginal gain w.r.t.\ the solution computed so far.
\begin{thrm}
\label{theorem:1-playerApproxRatio}
    Algorithm~\ref{alg:greedy} guarantees a $\frac{1}{6}(1-\frac{1}{e})$-approximation factor to constructive election control in $\PLTRR$.
\end{thrm}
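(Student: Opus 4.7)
The plan is to combine Theorem~\ref{theorem:MOVAppFactor} with a reduction to weighted influence maximization under LTM. By Theorem~\ref{theorem:MOVAppFactor}, it suffices to prove that Algorithm~\ref{alg:greedy} yields a $\tfrac{1}{2}\bigl(1-\tfrac{1}{e}\bigr)$-approximation for the problem of maximizing $\gplus(\cstar,S)$; the factor $1/3$ coming from that theorem then produces the claimed $\tfrac{1}{6}\bigl(1-\tfrac{1}{e}\bigr)$. Concretely, I would write out
\[
\gplus(\cstar,S) \;=\; \Ex{\sum_{v \in V} \frac{(1-\pi_v(\cstar))\sum_{u \in A\cap \Nin{v}} b_{uv}}{1+\sum_{u \in A\cap \Nin{v}} b_{uv}}},
\]
exploiting the fact that in \PLTRR\ the update formula applies to every node with at least one active in-neighbor, and that the term is $0$ for nodes with no active in-neighbor.

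The crucial observation is that the denominator lies in $[1,2]$ because $\sum_{u \in \Nin{v}} b_{uv}\le 1$. Swapping the order of summation,
\[
\sum_{v \in V}(1-\pi_v(\cstar))\sum_{u \in A\cap \Nin{v}} b_{uv} \;=\; \sum_{u \in A}\sum_{v \in \Nout{u}} b_{uv}(1-\pi_v(\cstar)) \;=\; \sum_{u \in A} w(u),
\]
which is exactly the quantity whose expectation equals $\sigma_w(S)$ for the weighted LTM on $\hat G$. Combining these two facts yields the sandwich
\[
\tfrac{1}{2}\sigma_w(S) \;\le\; \gplus(\cstar,S) \;\le\; \sigma_w(S).
\]

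Now I would invoke Kempe et al.'s result: since $\sigma_w$ is monotone and submodular in $S$ under LTM, and since the weights $w(v)=\sum_{u\in\Nout{v}} b_{vu}(1-\pi_u(\cstar))$ are polynomially bounded above by $|V|$ and below by $\Omega(1/|V|^{\gamma_1+\gamma_2})$ (by the model assumptions on $b_{uv}$ and $\pi_u(\cstar)$, plus the assumption that $1-\pi_u(\cstar)\ge 1/|V|^{\gamma_2}$ when there is more than one candidate), the hill-climbing procedure of Algorithm~\ref{alg:greedy} returns a set $S$ satisfying $\sigma_w(S) \ge (1-\tfrac{1}{e})\,\sigma_w(S_w^{\mathrm{opt}})$ with high probability through polynomial-size live-edge sampling. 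Letting $S^*$ be a maximizer of $\gplus(\cstar,\cdot)$, chaining the two inequalities
\[
\gplus(\cstar,S) \;\ge\; \tfrac{1}{2}\sigma_w(S) \;\ge\; \tfrac{1}{2}\bigl(1-\tfrac{1}{e}\bigr)\sigma_w(S_w^{\mathrm{opt}}) \;\ge\; \tfrac{1}{2}\bigl(1-\tfrac{1}{e}\bigr)\sigma_w(S^*) \;\ge\; \tfrac{1}{2}\bigl(1-\tfrac{1}{e}\bigr)\gplus(\cstar,S^*),
\]
delivers the approximation guarantee for the score-gain problem; feeding $\alpha=\tfrac{1}{2}(1-\tfrac{1}{e})$ into Theorem~\ref{theorem:MOVAppFactor} completes the proof.

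The main obstacle is the sandwich inequality: one must recognise that the \PLTRR\ update factor $1/(1+\sum b_{uv})$ is trapped in $[\tfrac{1}{2},1]$ precisely because of the LTM incoming-weight constraint, and that after double counting the remaining expression is exactly the expected total node weight $\sigma_w(S)$. The rest is standard: monotone submodularity of the weighted LTM spread and the polynomial-weight hypothesis needed for the Chernoff-based sampling estimator.
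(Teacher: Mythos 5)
Your proposal is correct and follows essentially the same route as the paper's proof: bound the \PLTRR{} denominator in $[1,2]$ to sandwich $\gplus(\cstar,S)$ between $\tfrac12\sigma_w(S)$ and $\sigma_w(S)$ after swapping the order of summation, invoke the Kempe et al.\ greedy guarantee for the weighted LTM on $\hat G$ (using the polynomial bounds on $b_{uv}$ and $\pi_v$ for the sampling estimator), and then apply Theorem~\ref{theorem:MOVAppFactor} with $\alpha=\tfrac12(1-\tfrac1e)$. Your explicit statement of the two-sided sandwich and the chained inequality through $\sigma_w(S_w^{\mathrm{opt}})\ge\sigma_w(S^*)\ge\gplus(\cstar,S^*)$ is a slightly cleaner presentation of exactly the argument the paper gives.
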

\begin{proof}
\label{proof:1-playerApproxRatio}
We first prove that Algorithm~\ref{alg:greedy} gives an $\frac{1}{2}(1-\frac{1}{e})$-approximation to the problem of maximizing the increment in score of the target candidate $\cstar$ in $\PLTRR$.
Let $S$ and $S^\star$ respectively be the set of initial seed nodes found by the greedy algorithm and the optimal one.
We have that
\begin{align*}
    \gplus(\cstar,S)
    &= F(\cstar, S) - F(\cstar) \\
    &  =\sum_{v \in V} \frac{\pi_v(\cstar) + \sum_{u \in A \cap \Nin{v}}b_{uv}}{1 + \sum_{u \in A \cap \Nin{v}}b_{uv}} -\sum_{v \in V} \pi_v(\cstar)\\
    & = \sum_{v \in V} \frac{(1-\pi_v(\cstar))\sum_{u \in A \cap \Nin{v}}b_{uv}}{1 + \sum_{u \in A \cap \Nin{v}}b_{uv}}
\end{align*}
and, since the denominator is at most 2, that
\[
\gplus(\cstar,S)
\geq \frac{1}{2} \sum_{v \in V} (1-\pi_v(\cstar))\sum_{u \in A \cap \Nin{v}}b_{uv}  
= \frac{1}{2}\sum_{u \in A} \sum_{v \in \Nout{u}}b_{uv}(1-\pi_v(\cstar)),
\]
where $A$ is the set of active nodes at the end of the process.

Note that $\sum_{u \in A} \sum_{v \in \Nout{u}}b_{uv}(1-\pi_v(\cstar))$ is exactly the objective function that the greedy algorithm maximizes.
Hence, using the result by Kempe et al.~\cite{kempe2015maximizing}
we know that 
{\[
\sum_{u \in A} \sum_{v \in \Nout{u}}b_{uv}(1-\pi_v(\cstar)) \geq 
\left(1 - \frac{1}{e}\right)
\sum_{u \in A^\star}\! \sum_{v \in \Nout{u}} \! b_{uv}(1-\pi_v(\cstar)),
\]}%
where $A^\star$ is the set of active nodes at the end of the process starting from $S^\star$.

Therefore
\(
    \gplus(\cstar,S) \geq  \frac{1}{2} (1 - 1/e)\,\gplus(\cstar, S^\star)
\)
since 
{\begin{align*}
  \gplus(\cstar, S^\star) 
  &= \sum_{v \in V} \frac{(1-\pi_v(\cstar))\sum_{u \in A^\star \cap \Nin{v}}b_{uv}}{1 + \sum_{u \in A^\star \cap \Nin{v}}b_{uv}}
  \leq\sum_{u \in A^\star} \sum_{v \in \Nout{u}}b_{uv}(1-\pi_v(\cstar)),
\end{align*}}%
where the inequality holds since all the denominators in $\gplus(\cstar, S^\star)$ are at least 1.
Thus, Algorithm~\ref{alg:greedy} achieves a $\frac{1}{2}\left(1-\frac{1}{e}\right)$-approximation to the maximum increment in score.
Using Theorem~\ref{theorem:MOVAppFactor} we get a $\frac{1}{6}\left(1-\frac{1}{e}\right)$-approximation for the \MOV.
\end{proof}

\paragraph{Destructive Election Control in \PLTRR.}
The \emph{destructive election control} problem is similar to the constructive problem, but in this scenario, in our models, the probability that a voter $v$ votes for $\cstar$ \emph{decreases} depending on the amount of influence received by $v$ and the loss of probability of $\cstar$ is evenly split over all the other candidates.
In this way, we avoid negative values and values that do not sum to 1. 
In detail, if $A$ is the set of active nodes at the end of LTM, then, for each $v \in V$, the preference list $\pi_v$ changes as follows:
\[
\tilde{\pi}_v(\cstar) = \frac{\pi_v(\cstar)}{1 + \sum_{u\in A \cap \Nin{v}} b_{uv}}
\,\text{ and }\,
\tilde{\pi}_v(c_i) = \frac{\pi_v(c_i) + \frac{1}{m-1} \sum_{u\in A \cap \Nin{v}} b_{uv}}{1 + \sum_{u\in A \cap \Nin{v}} b_{uv}}
\]
for each $c_i \neq \cstar$.
We define \MOVD, i.e., what we want to maximize, as
{\begin{align*}
    \MOVD(S) &:= F(\hat c, S) - F(\cstar, S) - (F(c, \emptyset) - F(\cstar, \emptyset))
    = \gminus(\cstar, S) + \gplus(\hat c, S) + \Delta,
\end{align*}}%
where $S$ is the initial set of seed nodes and $\Delta = F(\hat c, \emptyset) - F(c, \emptyset)$ is the sum of constant terms that are not modified by the process. Note that maximizing $\MOVD$ is $\mathit{NP}$-hard (it can be proved with a similar argument to that of Theorem~\ref{theorem:np-hardness}). 

Similarly to the constructive case, we define a new graph $\hat{G}$ with the same sets of nodes and edges of $G$. 
Then, we assign a weight to each node $v \in V$ equal to $w(v) := \sum_{u \in \Nout{v} } b_{vu} \pi_u(\cstar)$ and we run Algorithm~\ref{alg:greedy} to find a seed set that approximates the maximum expected weight of active nodes.
\begin{thrm}
\label{theorem:destructive}
    Algorithm~\ref{alg:greedy} guarantees a $\frac{1}{4}(1-\frac{1}{e})$-approximation factor to the \emph{destructive election control} in $\PLTRR$.
\end{thrm}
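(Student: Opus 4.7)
My plan mirrors the constructive analysis of Theorems~\ref{theorem:MOVAppFactor} and~\ref{theorem:1-playerApproxRatio}, with two adjustments that exploit the specific algebra of the destructive update rule. The overall ratio factorizes as $\tfrac14(1-1/e)=\tfrac12\cdot\tfrac12(1-1/e)$: the first $\tfrac12$ comes from a destructive analog of the $\MOV$-to-score reduction, and the $\tfrac12(1-1/e)$ from greedy on a suitable weighted LTM objective.

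\emph{Step 1 (reduction to maximizing $\gminus(\cstar,\cdot)$).} I would first show that any $\alpha$-approximation for maximizing $\gminus(\cstar,\cdot)$ yields an $\alpha/2$-approximation for $\MOVD$. The crucial identity of the destructive rule in $\PLTRR$ is that the probability mass lost by $\cstar$ at each voter is redistributed among the other candidates, so
\[
\sum_{c_i\neq\cstar}\gplus(c_i,S)=\gminus(\cstar,S),
\]
and in particular $\gplus(c_i,S)\leq\gminus(\cstar,S)$ for every opponent $c_i$ and every $S$. Taking $c_i=\hat c^\star$, the most voted opponent after the optimum $S^\star$, and using $F(\hat c^\star,\emptyset)\leq F(c,\emptyset)$ by definition of $c$, gives
\[
\MOVD(S^\star)=\bigl(F(\hat c^\star,S^\star)-F(c,\emptyset)\bigr)+\gminus(\cstar,S^\star)\leq 2\,\gminus(\cstar,S^\star).
\]
A matching lower bound $\MOVD(S)\geq\gminus(\cstar,S)$ for the algorithm's output $S$ would then complete the reduction. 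I would argue this by showing $F(\hat c_S,S)\geq F(c,\emptyset)$ using the pigeonhole fact that the top-gaining opponent $c^+$ satisfies $\gplus(c^+,S)\geq\gminus(\cstar,S)/(m-1)$.

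\emph{Step 2 (greedy on weighted LTM).} Following the argument of Theorem~\ref{theorem:1-playerApproxRatio}, I would then show that Algorithm~\ref{alg:greedy} run on $\hat G$ with node weights $w(v)=\sum_{u\in\Nout{v}}b_{vu}\pi_u(\cstar)$ is a $\tfrac12(1-1/e)$-approximation for $\gminus(\cstar,\cdot)$. Writing $T_v:=\sum_{u\in A\cap\Nin{v}}b_{uv}$, the destructive update formula gives
\[
\gminus(\cstar,S)=\Ex{\sum_v\frac{\pi_v(\cstar)\,T_v}{1+T_v}},
\]
and bounding $1\leq 1+T_v\leq 2$ followed by swapping the order of summation over directed edges yields the sandwich
\[
\tfrac12\,\sigma_w(S)\;\leq\;\gminus(\cstar,S)\;\leq\;\sigma_w(S).
\]
The standing assumptions $b_{uv},\pi_u(\cstar)=\Omega(|V|^{-\gamma})$ make $w$ polynomially bounded, so live-edge sampling combined with the hill-climbing guarantee of~\cite{kempe2015maximizing} delivers a $(1-1/e)$-approximation of $\sigma_w$ with high probability; combined with the sandwich this is the claimed $\tfrac12(1-1/e)$-approximation of $\gminus(\cstar,\cdot)$. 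Composing with Step~1 then yields the final $\tfrac14(1-1/e)$-approximation for $\MOVD$.

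The main obstacle is Step~1: unlike in the constructive case, where the inequality $\gplus(\cstar,S)\geq\gminus(c_i,S)$ directly delivers the reduction, in the destructive rule an individual opponent's score can actually decrease if its initial support exceeds $1/(m-1)$. Consequently, the naive bound $F(\hat c_S,S)\geq F(c,\emptyset)$ is not automatic and requires careful bookkeeping of where the redistributed mass lands; the pigeonhole bound on the top gainer $c^+$ is the natural tool, but turning it into a clean $\MOVD(S)\geq\gminus(\cstar,S)$ may need an additional case analysis depending on whether the initial lead $F(c,\emptyset)-F(c^+,\emptyset)$ is large or small relative to $\gminus(\cstar,S)$. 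The remaining ingredients — sandwich bounds, polynomial-weight estimation, and the greedy $(1-1/e)$ guarantee — are straightforward adaptations of the constructive proof.
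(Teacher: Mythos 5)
Your plan coincides with the paper's proof in all essentials. Step~2 is literally the paper's argument: the same node weights $w(v)=\sum_{u\in\Nout{v}}b_{vu}\pi_u(\cstar)$, the same sandwich $\tfrac12\sigma_w(S)\le g^-_D(\cstar,S)\le\sigma_w(S)$ obtained from $1\le 1+T_v\le 2$ after swapping the order of summation, and the same appeal to the $(1-1/e)$ greedy guarantee for weighted LTM with polynomially bounded weights. Your Step~1 is a modular repackaging of the paper's concluding chain of inequalities, which buys the same extra factor $\tfrac12$ by absorbing the gain $g^+_D(\bar c,S^*)$ of the post-optimum front-runner into $g^-_D(\cstar,S^*)$ and then discarding the leftover term $F(\hat c,S)-F(c,\emptyset)$ as nonnegative; the constants compose identically to $\tfrac14(1-\tfrac1e)$.

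The one place you stop short --- establishing $\MOVD(S)\ge g^-_D(\cstar,S)$, equivalently $F(\hat c,S)\ge F(c,\emptyset)$ --- is exactly the step the paper assumes without comment when it pulls $g^+_D(\hat c,S)+F(\hat c,\emptyset)-F(c,\emptyset)$ inside a bracket multiplied by $\tfrac14(1-\tfrac1e)<1$. Your worry is legitimate: with $m\ge 3$ the redistribution rule gives each opponent a $\tfrac{1}{m-1}$ share of $\cstar$'s lost mass, so an opponent with $\pi_v(c_i)>\tfrac{1}{m-1}$ loses score at $v$; taking $\pi_v(\cstar)$ small, $\pi_v(c)$ near $1$, and a third candidate at $0$ for all $v$ yields instances where $F(\hat c,S)<F(c,\emptyset)$ and $\MOVD(S)$ is even negative, so the inequality is genuinely false in general. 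The same phenomenon undermines the companion fact $g^+_D(\bar c,S^*)\le g^-_D(\cstar,S^*)$ (your ``in particular'' after the conservation identity), which both you and the paper need and which only follows when every opponent's gain is nonnegative. For $m=2$, or under the assumption that no opponent's initial support exceeds $\tfrac{1}{m-1}$ at influenced voters, both arguments go through. In short: the gap you flagged is real, but it is a gap in the paper's own proof rather than a divergence of your approach from it; your proposed pigeonhole/case analysis would be needed to repair the published argument as well.
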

\begin{proof}
We first prove that Algorithm~\ref{alg:greedy} achieves an $\frac{1}{2}(1-\frac{1}{e})$ approximation factor to the problem of maximizing the decrease in score of the target candidate $\cstar$ in $\PLTRR$.
Let $S$ and $S^\star$ respectively be the set of initial seed nodes found by the greedy algorithm and the optimal one.
Let $g^- _D(\cstar, S)$ be the decrease in score of candidate $\cstar$ with solution $S$, i.e., $g^- _D(\cstar, S) = F(\cstar, \emptyset) - F(\cstar, S)$.
Let $A$ be the set of active nodes at the end of the process; then we have that
{\[
g^- _D(\cstar, S) 
=\sum_{v \in V} \frac{\pi_v(\cstar)\sum_{u \in A \cap \Nin{v}} b_{uv}}{1 + \sum_{u \in A \cap \Nin{v}}b_{uv}}
\]}%
and, since the denominator is at most 2, that
{
\begin{align*}
    g^- _D(\cstar, S) & \geq \frac{1}{2} \sum_{v \in V} \left( \pi_v(\cstar)\sum_{u \in A \cap \Nin{v} }b_{uv} \right)
    \\
    & = \frac{1}{2}\sum_{u \in A} \sum_{v \in \Nout{u} } \pi_v(\cstar) \cdot b_{uv}.
\end{align*}}%

Note that \(\sum_{u \in A} \sum_{v \in \Nout{u} } \pi_v(\cstar) \cdot b_{uv}\) is exactly the objective function of the greedy Algorithm that maximizes the weighted-LTM for \(\hat{G}\).
Hence, using the result by Kempe et al.~\cite{kempe2015maximizing}, we know that 
{
\[
    \sum_{u \in A} \sum_{v \in \Nout{u} }b_{uv}\, \pi_v(\cstar)
    \geq
    \left(1-\frac{1}{e}\right)
    \sum_{u \in A^\star} \sum_{v \in \Nout{u}}b_{uv} \pi_v(\cstar),
\]}%
where $A^\star$ is the optimal set of active nodes, i.e., the set of active nodes at the end process starting from $S^\star$ ($S^\star$ the optimal solution for the weighted-LTM).

Therefore
{
\[
    g^- _D(\cstar, S) \geq \frac{1}{2}  \left(1 - \frac{1}{e}\right) g^- _D(\cstar, S^\star)
\]}%
because
{
\begin{align*}
    g^- _D(\cstar, S^\star) 
    &= \sum_{v \in V} \frac{\pi_v(\cstar)\sum_{u \in A^\star \cap \Nin{v}}b_{uv}}{1 + \sum_{u \in A^\star \cap \Nin{v}}b_{uv}}
    \\
    &\leq \sum_{v \in V} \pi_v(\cstar) \sum_{u \in A^\star \cap \Nin{v}}b_{uv}
    = \sum_{u \in A^\star} \sum_{v \in \Nout{u}}b_{uv}\, \pi_v(\cstar),
\end{align*}}%
where the inequality is due to the fact that the denominator in all the terms of $g^- _D(\cstar, S^\star)$ is at least 1.
Thus we achieve a $\frac{1}{2}\left(1-\frac{1}{e}\right)$-approximation to the maximum increment in score.

Let us fix $c$ and $\hat{c}$, respectively, as the candidates different from $\cstar$ with the highest score before and after the solution $S$ is applied; let $\bar{c}$ be the most voted opponent after the optimal solution $S^*$ is applied. 
Then we have that
{
\begin{align*}
    &\MOV(S) = g^- _D(\cstar, S) + g^+ _D(\hat c, S)  + F(\hat c, \emptyset)- F(c, \emptyset)
    \\
    &\geq 
    \frac{1}{2} \left(1-\frac{1}{e}\right) g^- _D(\cstar, S*) + g^+ _D(\hat c, S)  + F(\hat c, \emptyset)- F(c, \emptyset)
    \\
    &\geq 
    \frac{1}{4} \left(1-\frac{1}{e}\right) \left[
    g^- _D(\cstar, S^*) + g^+ _D(\bar c, S^*)  + g^+ _D(\hat c, S) + F(\hat c, \emptyset)- F(c, \emptyset)
    \right]
    \\
    &=
    \frac{1}{4} \left(1-\frac{1}{e}\right) \left[ \MOV(S^*) + g^+ _D(\hat c, S)  + F(\hat c, \emptyset) - F(\bar c, \emptyset)
    \right]
    \\
    &\geq 
    \frac{1}{4} \left(1-\frac{1}{e}\right) \MOV(S^*),
\end{align*}}%
where the last inequality holds since, by definition of $\bar c$ and $\hat c$, we have that
\(
	 g^+ _D(\hat c, S)  + F(\hat c, \emptyset) \geq g^+ _D(\bar c, S)  + F(\bar c, \emptyset).
\)
\end{proof}
\section{Simulations}
\label{sec:simulation}

\begin{figure}[ht!]
\begin{center}
\begin{minipage}[t]{0.5\textwidth}
  \includegraphics[width=\linewidth]{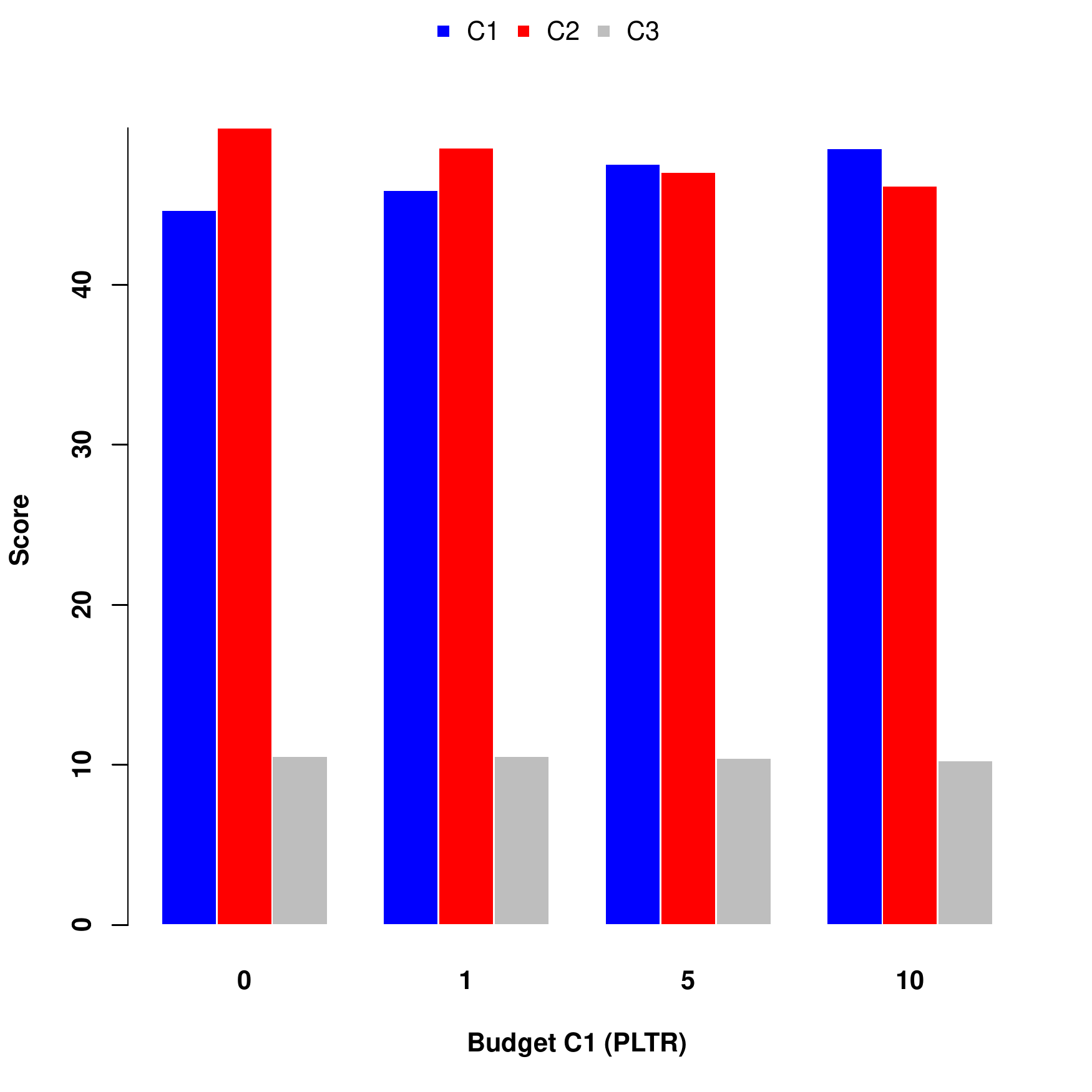}
\end{minipage}\hfill%
\begin{minipage}[t]{0.5\textwidth}
  \includegraphics[width=\linewidth]{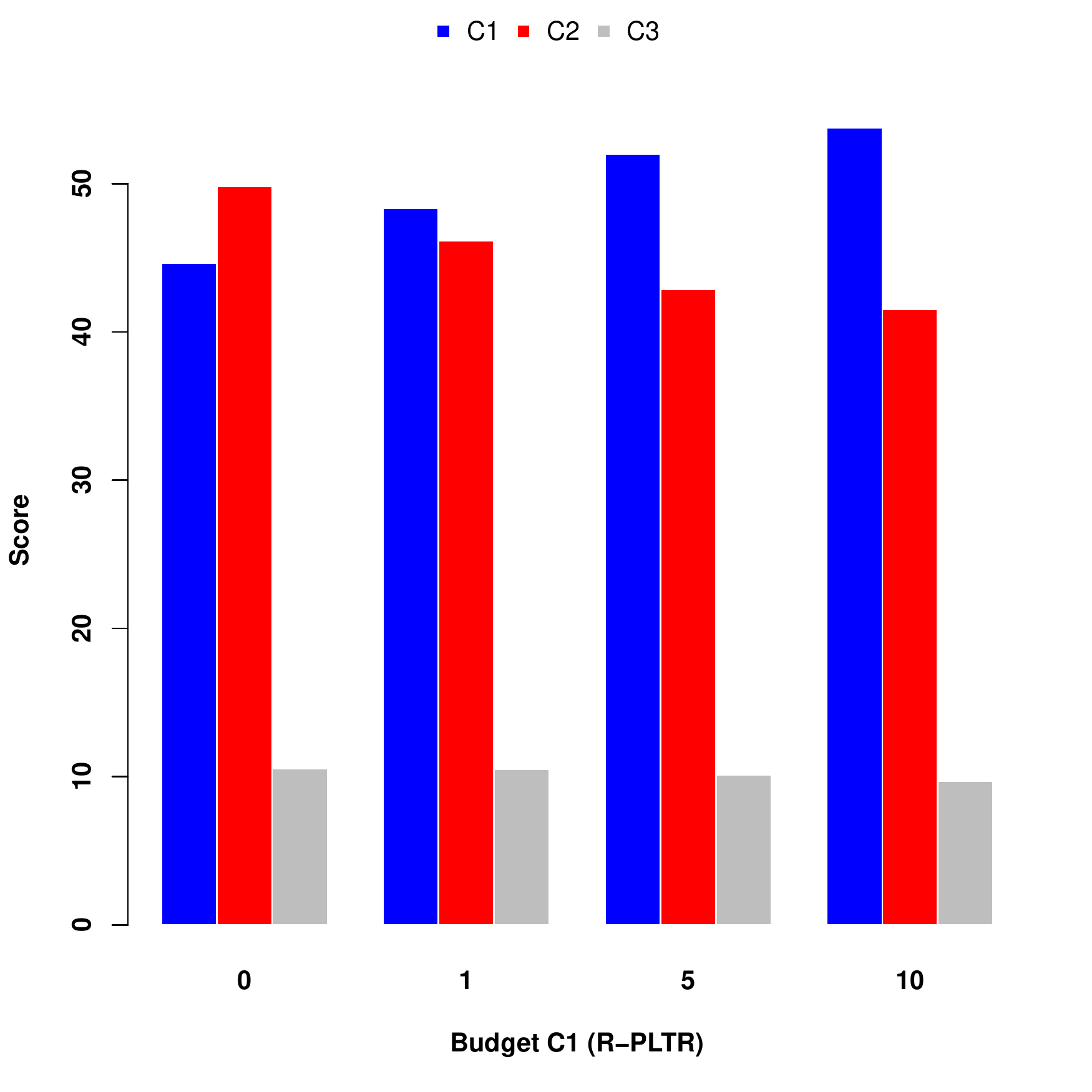}
\end{minipage}\hfill%
\begin{minipage}[t]{0.5\textwidth}
  \includegraphics[width=\linewidth]{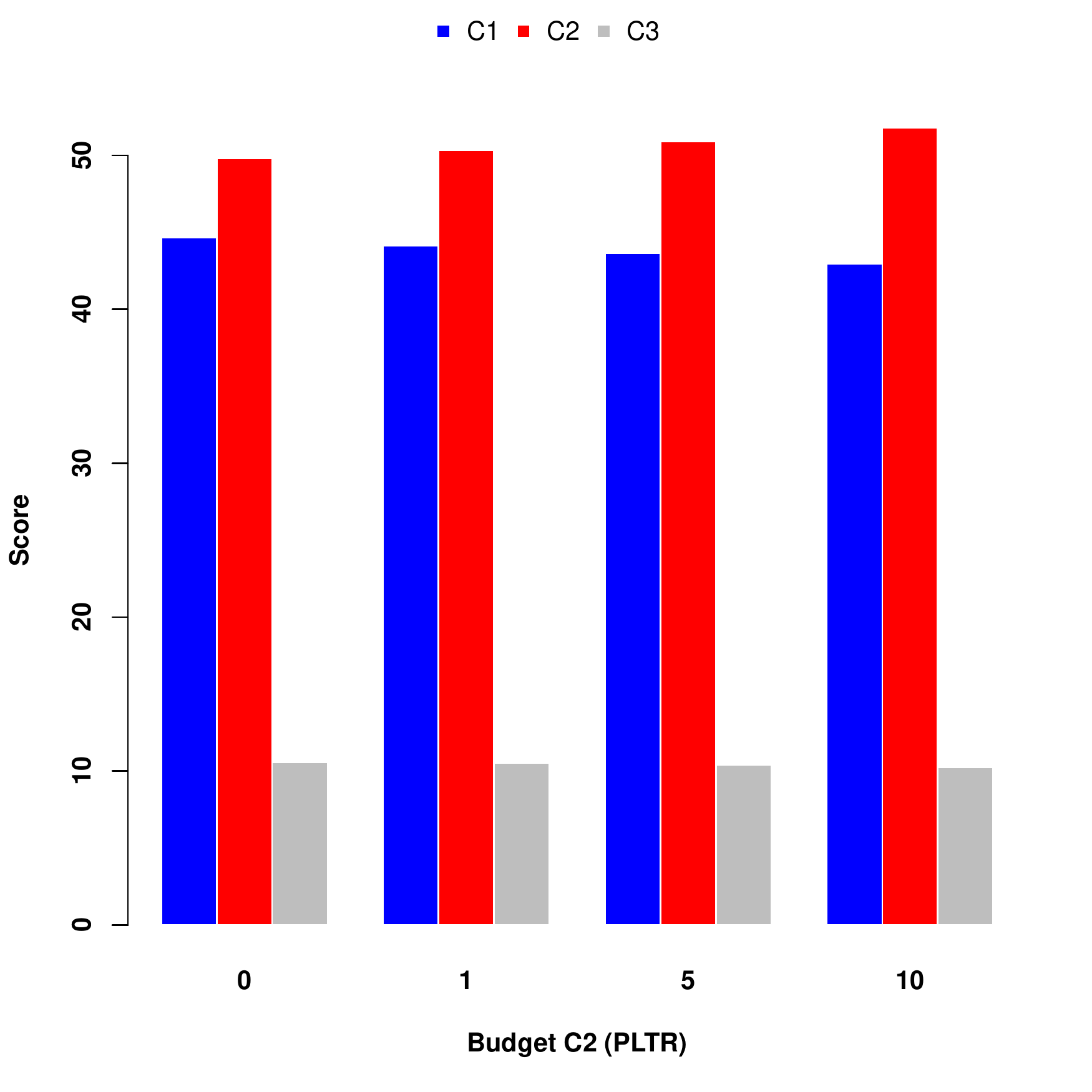}
\end{minipage}\hfill%
\begin{minipage}[t]{0.5\textwidth}
  \includegraphics[width=\linewidth]{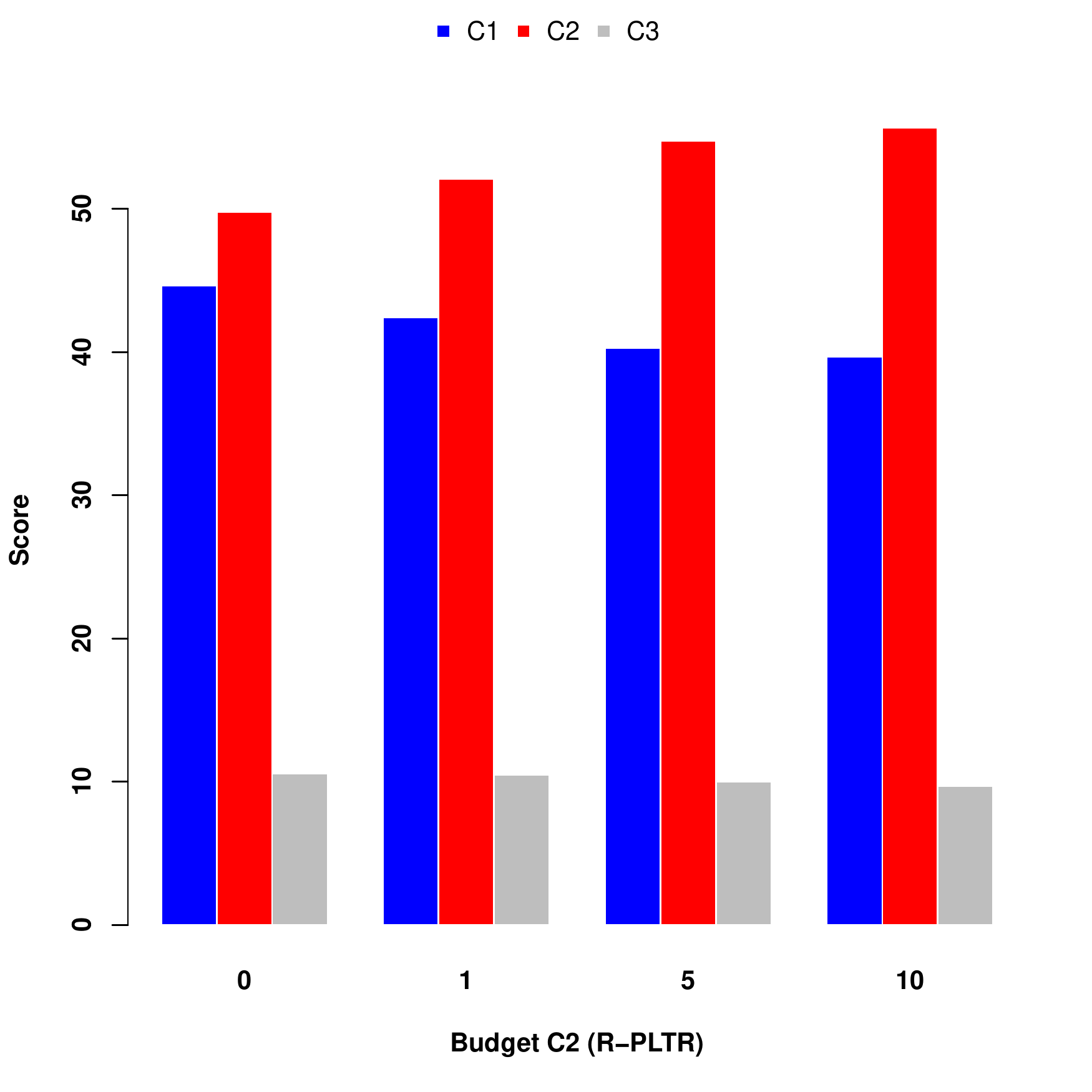}
\end{minipage}\hfill%
\begin{minipage}[t]{0.5\textwidth}
  \includegraphics[width=\linewidth]{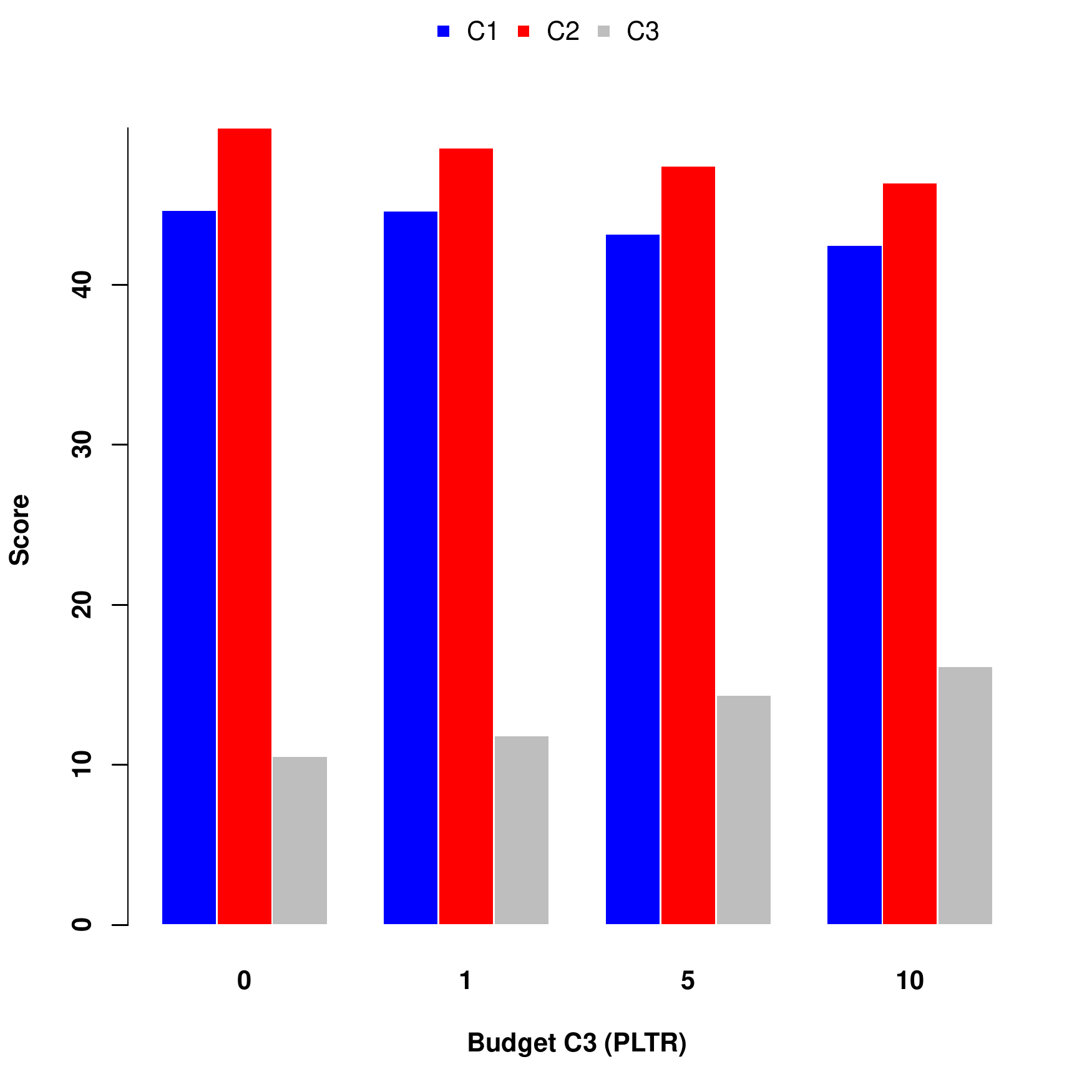}
\end{minipage}\hfill%
\begin{minipage}[t]{0.5\textwidth}
  \includegraphics[width=\linewidth]{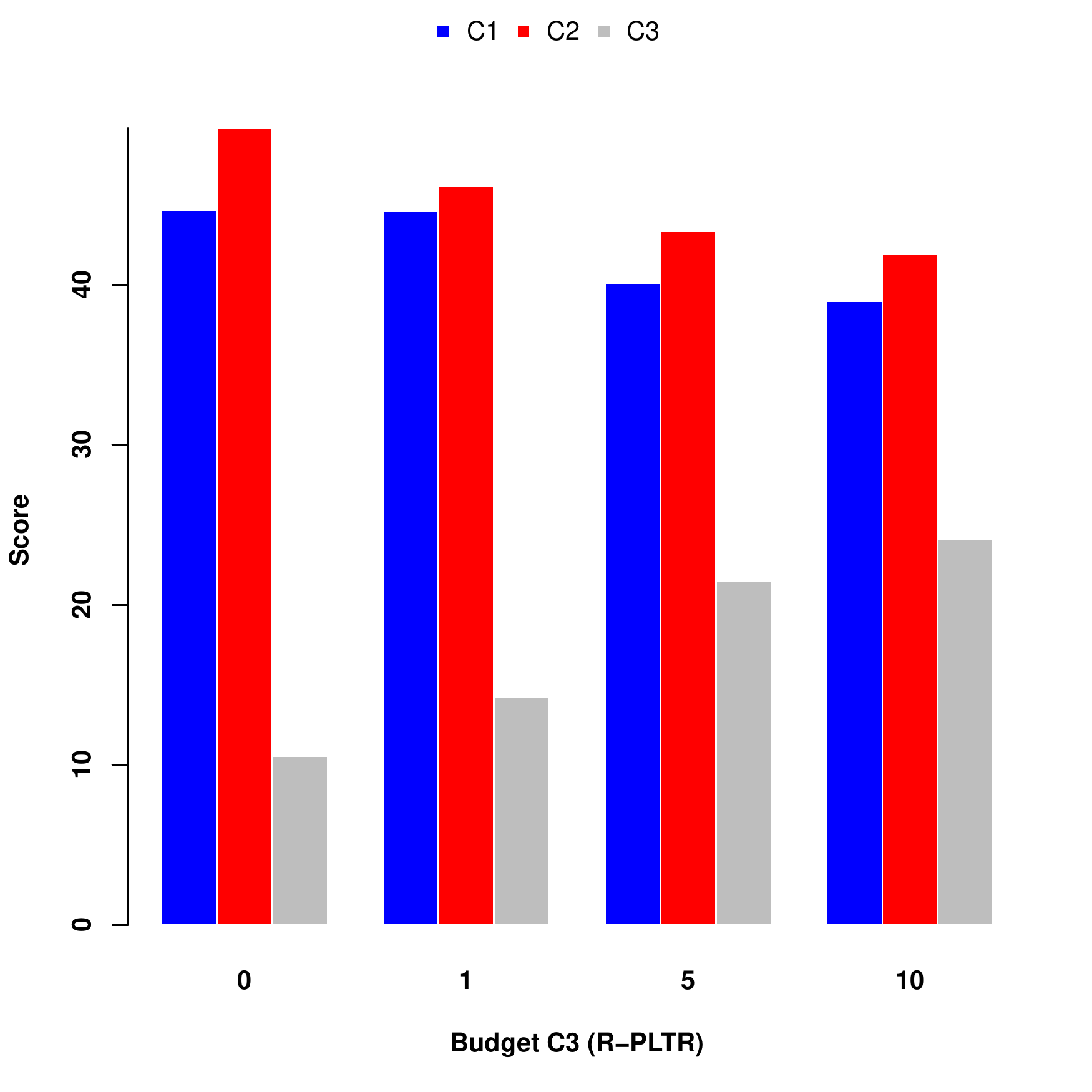}
\end{minipage}
\caption{
Candidates' scores in \emph{polbooks} in $\PLTR$ (left column) and $\PLTRR$ (right column), considering as target candidate the ``liberal'' (top), the ``conservative'' (center), and the ``neutral'' (bottom).
}
\label{fig:scores_polbooks}
\end{center}
\end{figure}

\begin{figure}[ht!]
\begin{center}
    \begin{minipage}[t]{0.5\textwidth}
      \includegraphics[width=\linewidth]{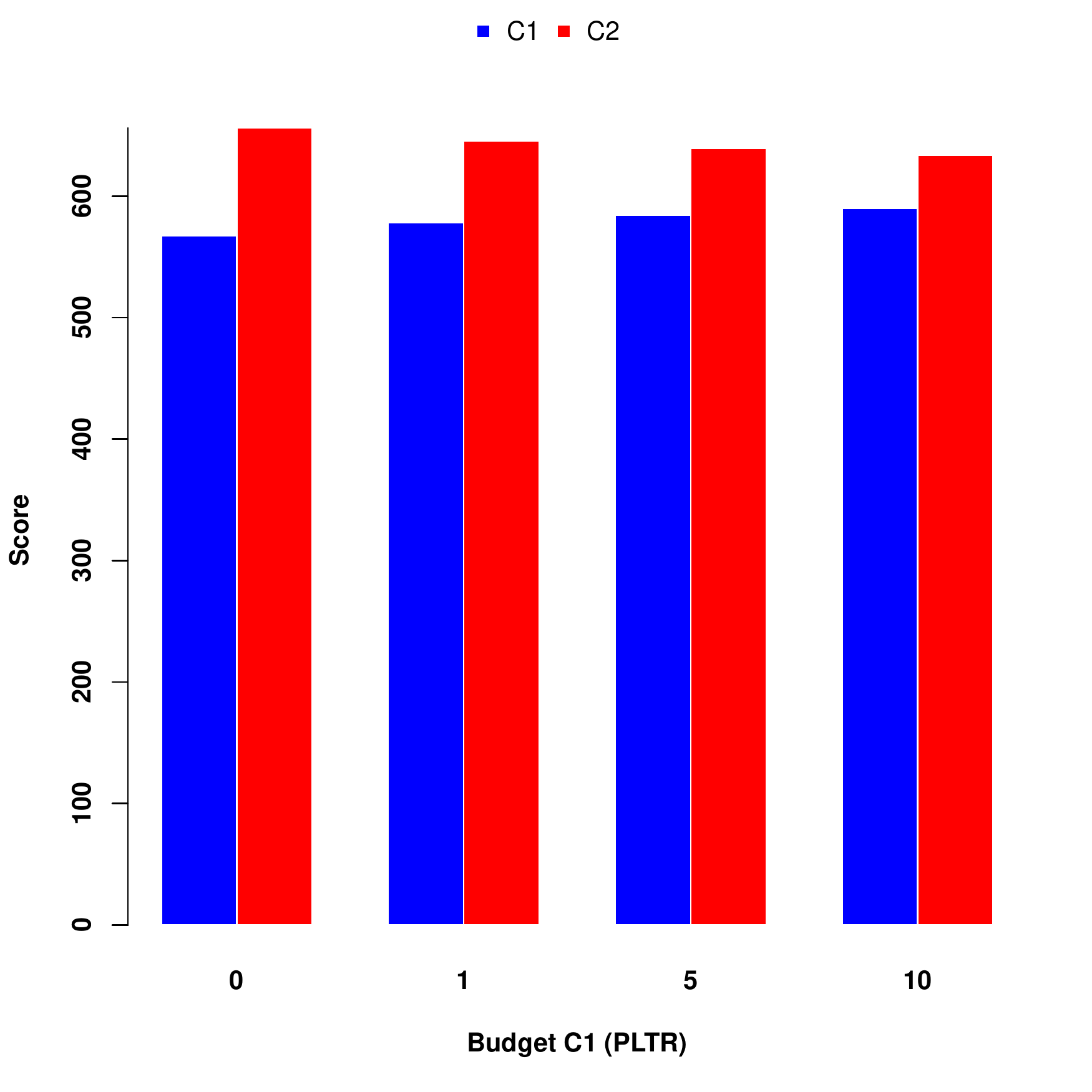}
    \end{minipage}\hfill%
    \begin{minipage}[t]{0.5\textwidth}
      \includegraphics[width=\linewidth]{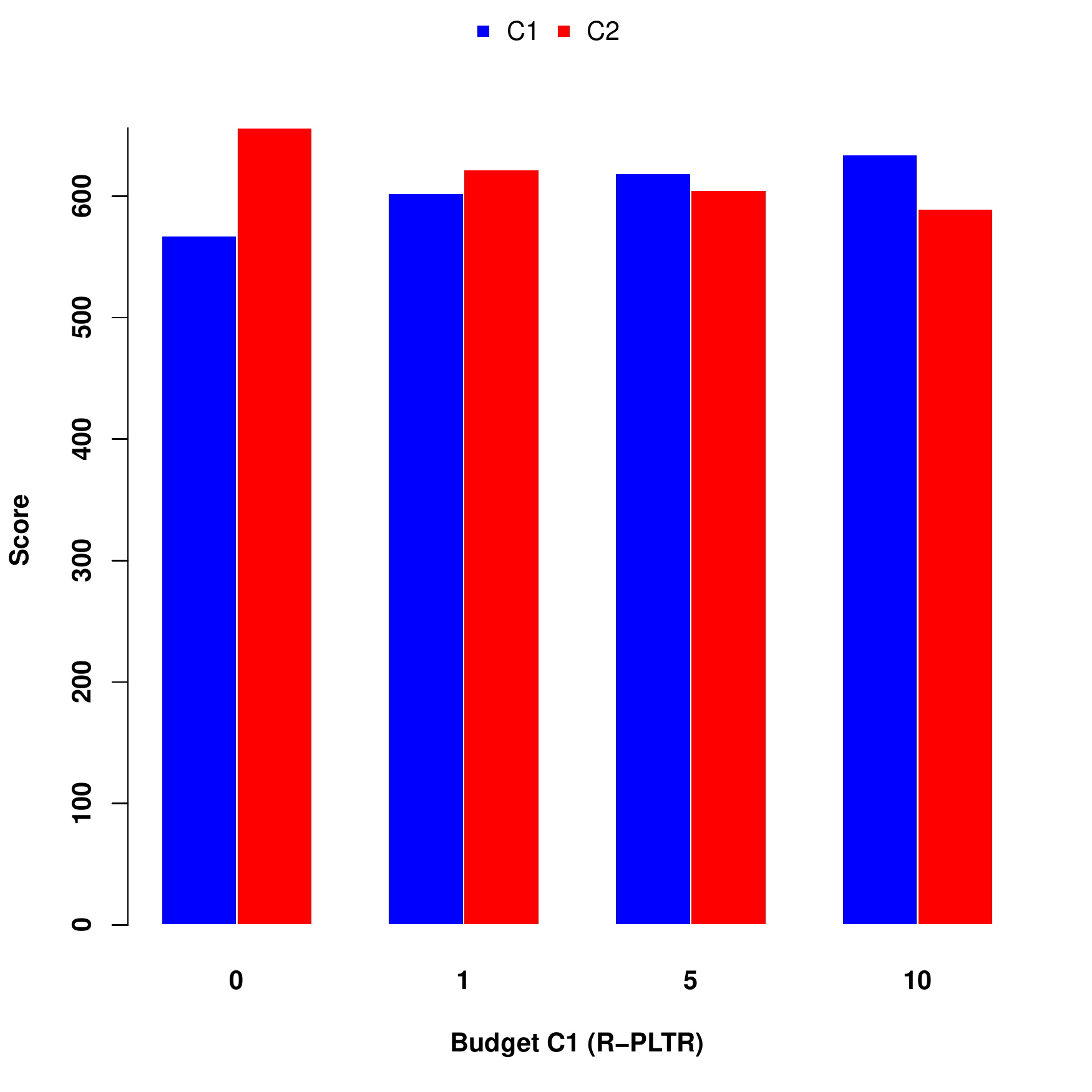}
    \end{minipage}\hfill%
    \begin{minipage}[t]{0.5\textwidth}
      \includegraphics[width=\linewidth]{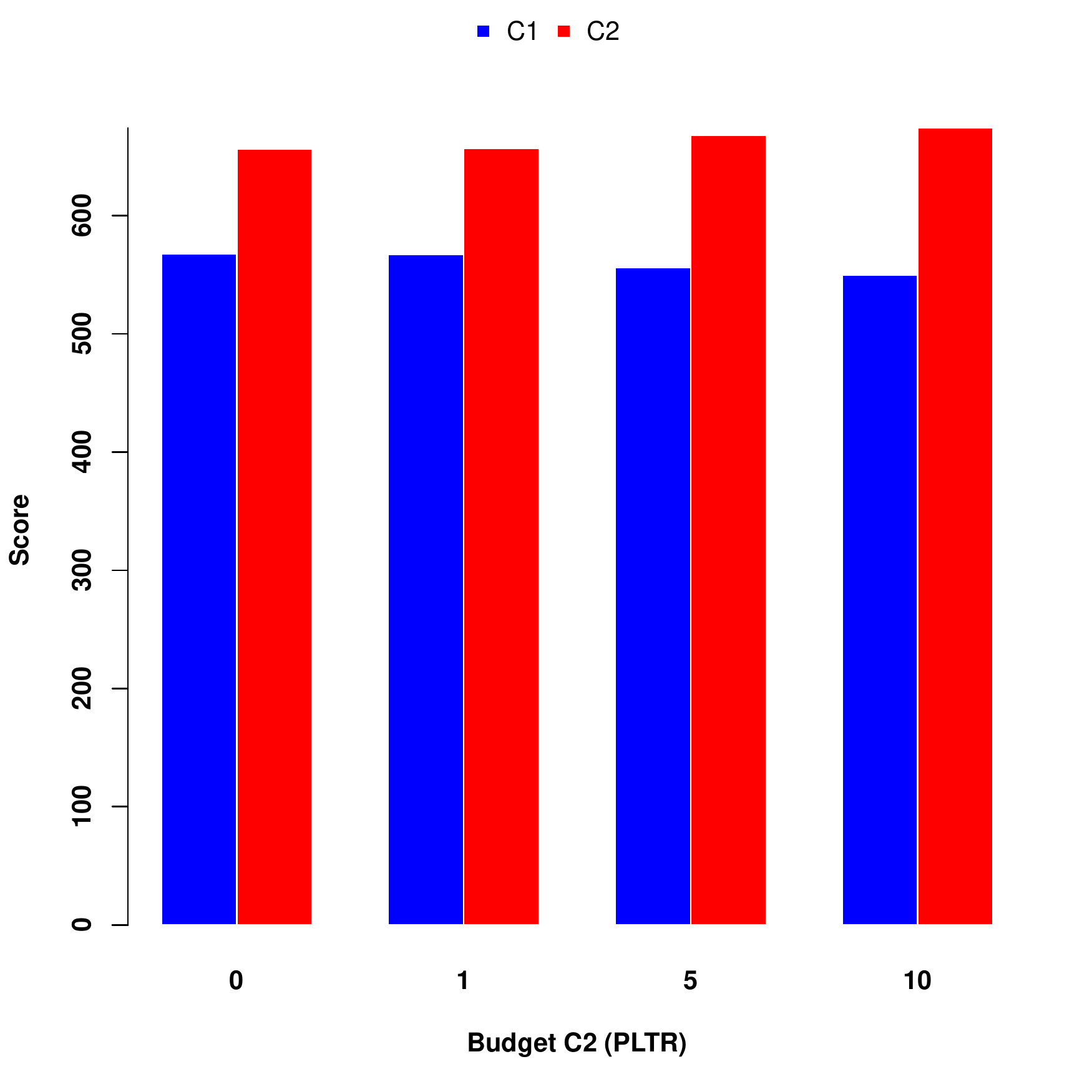}
    \end{minipage}\hfill%
    \begin{minipage}[t]{0.5\textwidth}
      \includegraphics[width=\linewidth]{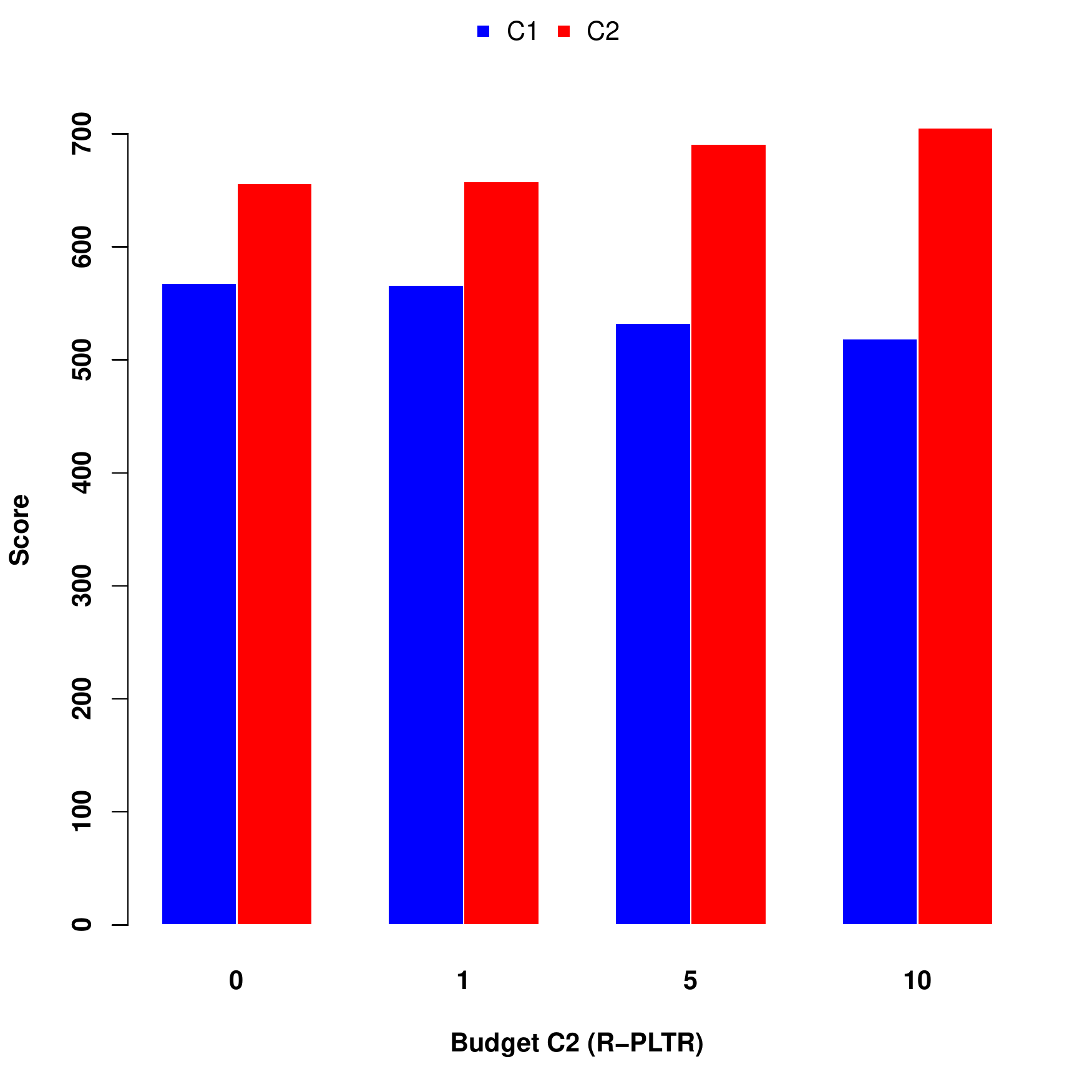}
    \end{minipage}
\end{center}
\caption{
Candidates' scores in \emph{polblogs} in $\PLTR$ (left column) and $\PLTRR$ (right column), considering as target candidate the ``liberal'' (top) and the ``conservative'' (bottom).
}
\label{fig:scores_polblogs}
\end{figure}

\begin{figure}[ht!]
\begin{minipage}[t]{0.5\textwidth}
  \includegraphics[width=\linewidth]{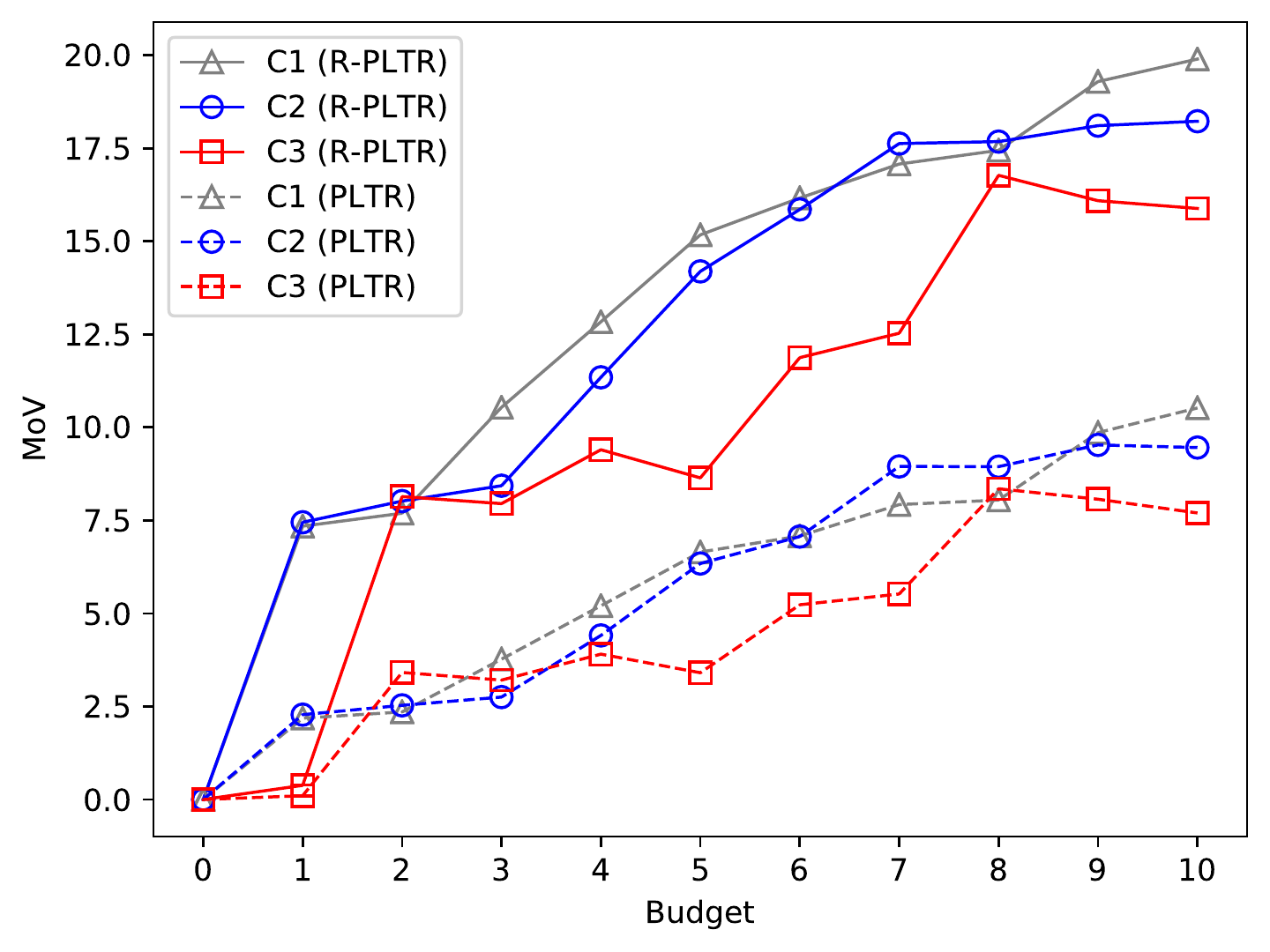}
\end{minipage}\hfill%
\begin{minipage}[t]{0.5\textwidth}
  \includegraphics[width=\linewidth]{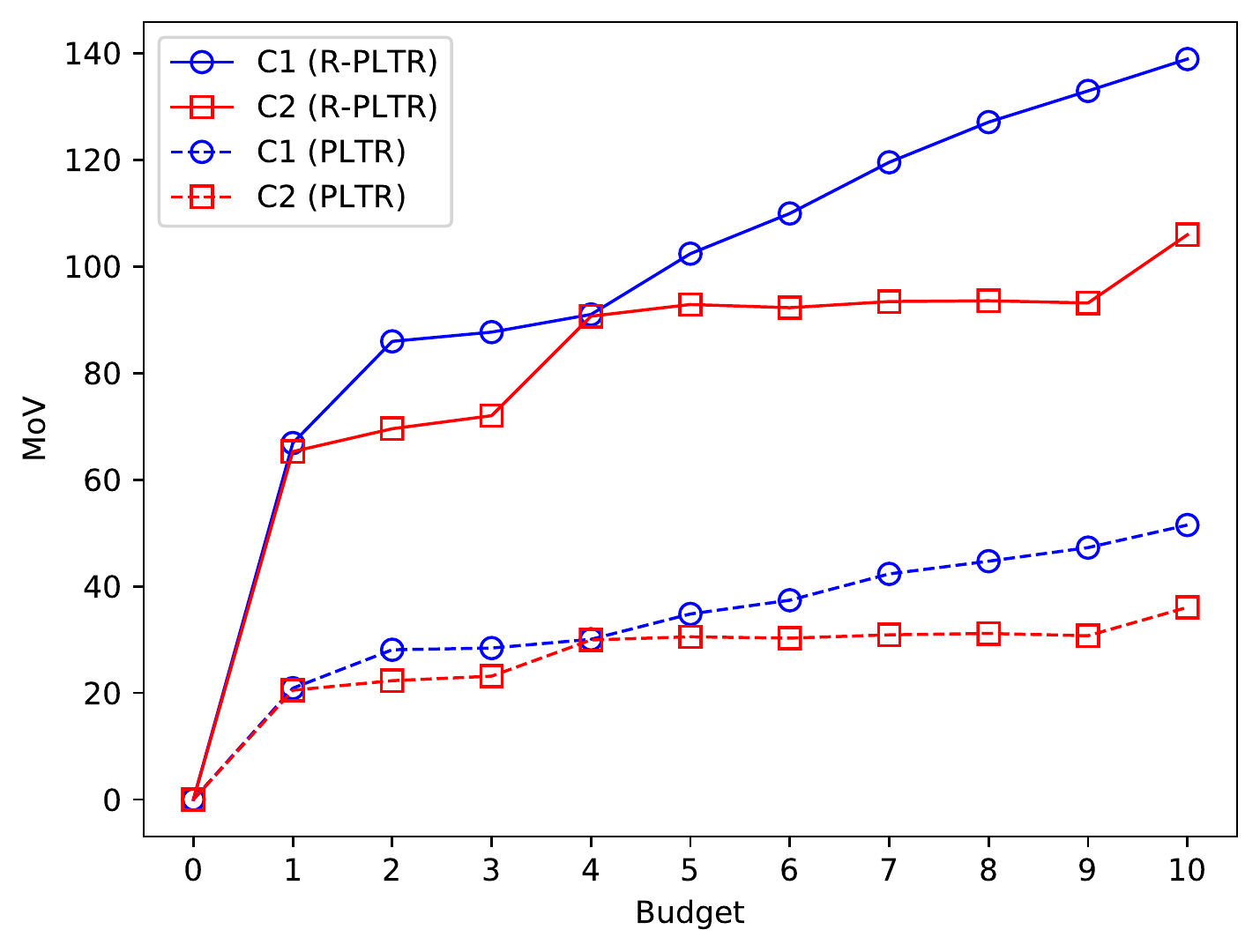}
\end{minipage}
\caption{The $\MOV$ calculated using the presented algorithm for \emph{polbooks} (left) and \emph{polblogs} (right), both in $\PLTR$ (dashed line) and $\PLTRR$ (solid line), considering as target candidate the ``liberal'' (blue line), the ``conservative'' (red line), and the ``neutral'' (grey line).}
\label{fig:mov}

\begin{minipage}[t]{0.5\textwidth}
  \includegraphics[width=\linewidth]{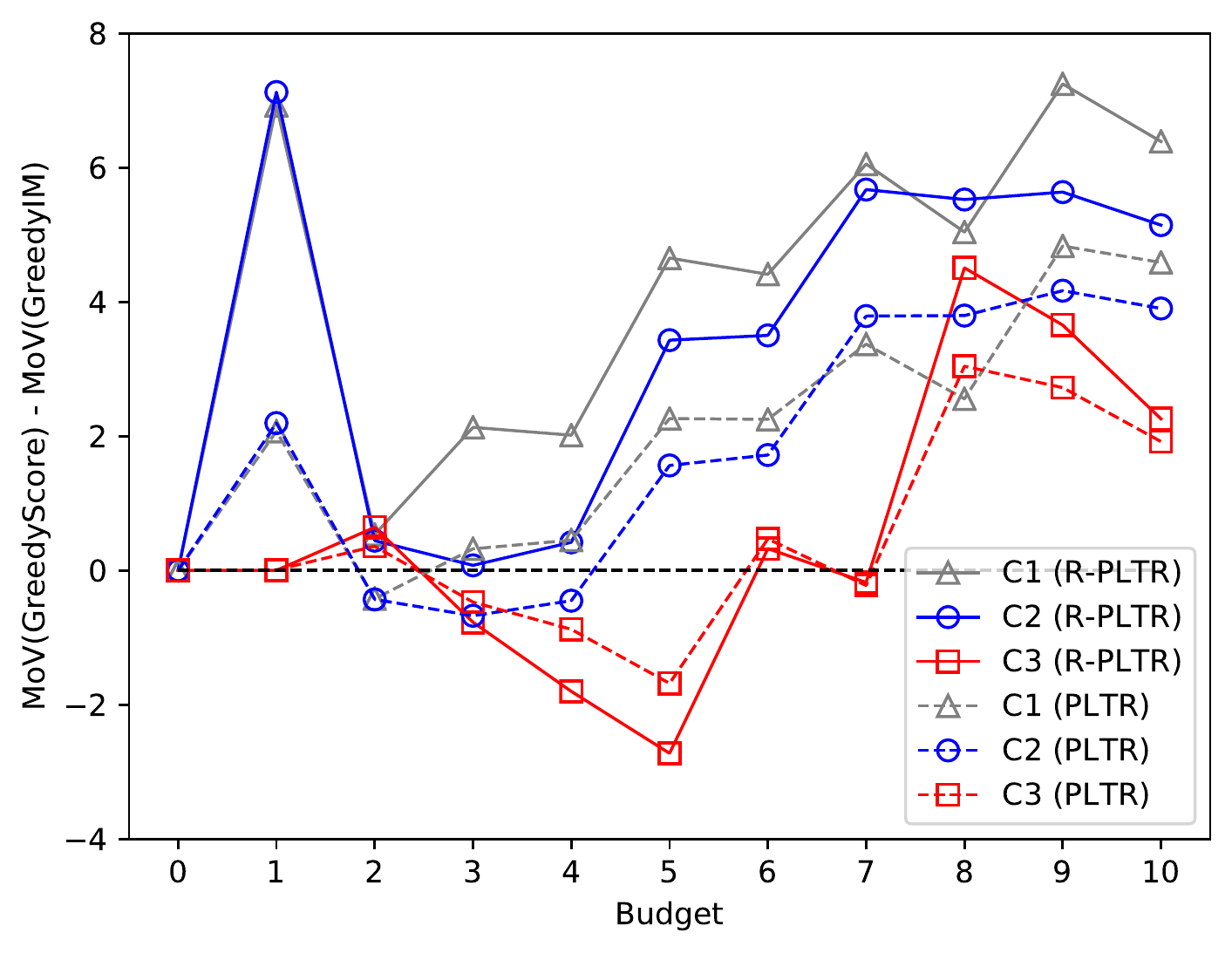}
\end{minipage}\hfill%
\begin{minipage}[t]{0.5\textwidth}
  \includegraphics[width=\linewidth]{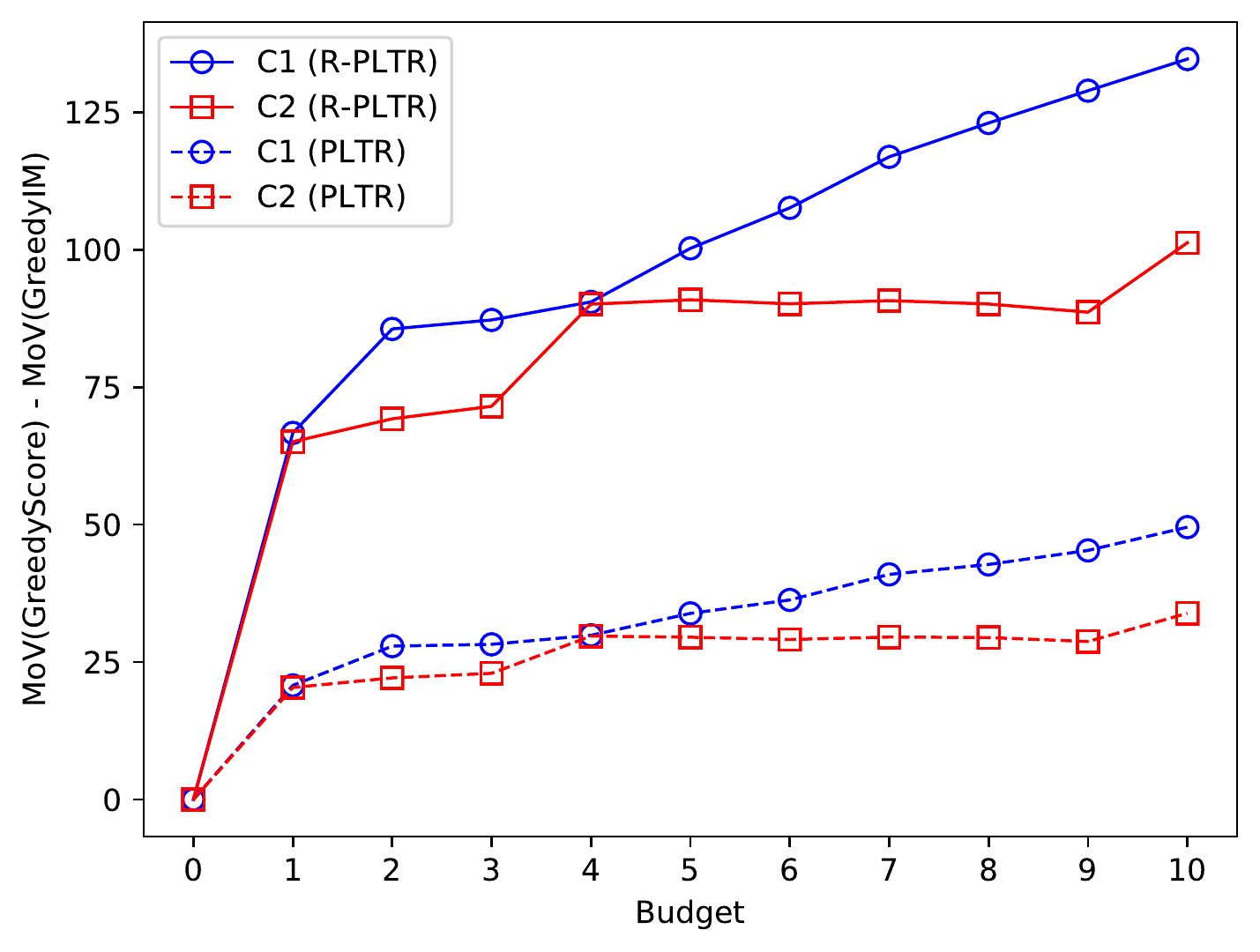}
\end{minipage}
\caption{Difference between $\MOV$ obtained using our greedy algorithm and $\MOV$ obtained using the standard greedy algorithm for Influence Maximization problem in \emph{polbooks} (left) and \emph{polblogs} (right). 
Values greater than 0 are when our algorithm performs better than the simple Greedy for Influence Maximization.
}
\label{fig:greedy}
\end{figure}

We simulate our model on two real-world social networks%
\footnote{The datasets are taken from \url{http://networkrepository.com/}}
on which political campaigning messages could spread:
\begin{itemize}
    \item \emph{polbooks}: an undirected network with 105 nodes and 882 edges where nodes are political books and edges represent co-purchasing behavior; nodes are labeled as ``liberal,'' ``conservative,'' or ``neutral.''
    \item \emph{polblogs}: a directed network with 1,224 nodes and 19,025 edges where nodes are web blogs about US politics and edges hyperlinks connecting them; nodes are labeled as ``liberal'' or ``conservative.''
\end{itemize}
The number of candidates in our simulations is based on the ground truth of the datasets; 
as mentioned earlier, \emph{polbooks} has three clusters and \emph{polblogs} has two clusters based on different US political parties.
We set the probability of each node $v$ to vote for, say, a ``liberal'' candidate proportionally to the number of neighbors labeled as ``liberal,'' 
i.e., we set $\pi_v(c) = \frac{|N_v \cap B|}{|N_v|}$ where $c$ is the ``liberal'' candidate, $B$ is the set of nodes labeled as ``liberal,'' and $N_v$ is the set of neighbors of $v$.
For each node $v$ we sampled the ``non-incoming influence weight'' $\bar{b}_v$ uniformly at random in $[0,1]$ and assigned the remaining influence weight uniformly among its incoming neighbors, i.e., we assigned to each edge $(u,v)$ a weight $b_{uv} = \frac{1-\bar{b}_v}{|N_v^i|}$.

In our simulations, we run \textit{GreedyScore} (Algorithm~\ref{alg:greedy}) for the election control problem in $\PLTRR$.
Then, we measure the score and the \MOV of each candidate using as starting seed nodes the ones found by the algorithm both in $\PLTR$ and in $\PLTRR$.
We run the simulation considering each different candidate as the target one to cover multiple scenarios, considering as budget values the ones in $\{0,1,5,10\}$.
Then, as baseline to compare, we also considered as seed nodes the most influential ones, i.e., the nodes selected by \textit{GreedyIM}, the classical greedy algorithm for Influence Maximization~\cite{kempe2015maximizing}.

For the implementation, we used .Net framework 4.6.2 and C\# programming language. 
We have implemented five different classes for managing the graph, the LTM process, the $\PLTR$ process, and a GUI.
We execute the simulations on a system with the following specifications: CPU Intel Core i7-6700HQ 2.6 GHz, with $4 \times 32$ KB 8-way L1 (data and inst) cache, and $4 \times 256$ KB 4-way L2 cache, and $6$ MB 12-way L3 cache, RAM 16G DDR4.
Each simulation has a running time of approximately 40 seconds for \emph{poolbooks} and 140 minutes for \emph{polblogs}. 

The results relative to the scores are shown in Figures~\ref{fig:scores_polbooks} and~\ref{fig:scores_polblogs}.
As expected, the effect of our algorithm in $\PLTRR$ is amplified compared to $\PLTR$, since it affects a greater number of voters.
Taking as example the ``liberal'' candidate in \emph{polbooks}, we need a budget $B=5$ to make it overtake the ``conservative'' candidate in $\PLTR$, while a budget $B=1$ is enough in $\PLTRR$ (Figure~\ref{fig:scores_polbooks});
in \emph{polblogs}, instead, we are not able to make the ``liberal'' candidate win in $\PLTR$ with budget $B=10$, but it is enough a budget $B=5$ to make it overtake the ``conservative'' candidate in $\PLTRR$ (Figure~\ref{fig:scores_polblogs}).

The results relative to \MOV are presented in Figure~\ref{fig:mov}.
We can note that, as a general trend, candidates with lower probability of winning, are the most affected by the influence generated by the seed nodes selected by our algorithm both in $\PLTR$ and $\PLTRR$.
The ``neutral'' and ``liberal'' candidates, respectively last and second last voted, have the higher \MOV in \emph{polbooks} (see Figure~\ref{fig:mov}, on the left), while the ``liberal'' candidate, which was losing the elections, has the higher \MOV in \emph{polblogs} (see Figure~\ref{fig:mov}, on the right).

Finally, in Figure~\ref{fig:greedy} we present the difference between the \MOV calculated by \textit{GreedyScore} and \textit{GreedyIM}.
The simulations show that our algorithm outperforms \textit{GreedyIM}, as expected. 
The only scenario in which our algorithm performs worse is that in which we influence, with low budget, the already winning candidate (see Figure~\ref{fig:greedy}, on the left, red lines).
The reason why \textit{GreedyScore} works better than \textit{GreedyIM} is that it looks for seeds that will influence ``critical'' voters, i.e., voters on which the influence will have more impact on the global score of the candidates, while \textit{GreedyIM} just looks for influential voters, independently from their initial opinion.

\section{Conclusions and Future Work}
\label{sec:conclusion}
Influencing elections by means of social networks is a significant issue in modern society, and understanding this phenomenon is of crucial importance in order to prevent the integrity of democracy.
Our results constitute the first step towards realistic modeling of the use of social influence to control elections as our models take into account that voters might hide their preferences to a controller. 
In one of our models the election control problem cannot be approximated within any reasonable bound, under some computational complexity hypothesis. For the other model we provide an approximation algorithm that guarantees a constant factor approximation ratio.
The results in this paper open several research directions.
We plan to study the election control problem in a variant of \PLTR where multiple campaigns affect voters' opinions on different candidates. 
It is also worth to investigate models with uncertainty in other voting systems.
Finally, it would be interesting to consider uncertainty models also for the diffusion process, e.g., in robust influence maximization only a probability distribution on the edge's weights is known.

\cleardoublepage
\bibliographystyle{alpha}
\bibliography{references}

\end{document}